\def\BibTeX{{\rm B\kern-.05em{\sc i\kern-.025em b}\kern-.08em
    T\kern-.1667em\lower.7ex\hbox{E}\kern-.125emX}}
\newtheorem{theorem}{Theorem}
\definecolor{mygray}{gray}{0.6}
\definecolor{myblue}{rgb}{0.8,0.85,1}
\newcolumntype{L}[1]{>{\raggedright\let\newline\\\arraybackslash\hspace{0pt}}m{#1}}
\newcolumntype{C}[1]{>{\centering\let\newline\\\arraybackslash\hspace{0pt}}m{#1}}
\newcolumntype{R}[1]{>{\raggedleft\let\newline\\\arraybackslash\hspace{0pt}}m{#1}}
\begin{document}

\title{Edge Association Strategies for Synthetic Data Empowered Hierarchical Federated Learning with Non-IID Data}

\author{
Jer Shyuan Ng\thanks{J.~S.~Ng, A.~P.~Kalapaaking, X.~Xia, I.~Khalil and I.~Gondal are with Royal Melbourne Institute of Technology, Australia. (Email: jer.shyuan.ng@rmit.edu.au.)}, 
Aditya Pribadi Kalapaaking, \\
Xiaoyu Xia, 
Dusit Niyato\thanks{D.~Niyato is with School of Computer Science and Engineering, Nanyang Technological University, Singapore.}~\textit{IEEE~Fellow}, 
Ibrahim Khalil, 
Iqbal Gondal
}
\maketitle

\begin{abstract}
In recent years, Federated Learning (FL) has emerged as a widely adopted privacy-preserving distributed training approach, attracting significant interest from both academia and industry. Research efforts have been dedicated to improving different aspects of FL, such as algorithm improvement, resource allocation, and client selection, to enable its deployment in distributed edge networks for practical applications. One of the reasons for the poor FL model performance is due to the worker dropout during training as the FL server may be located far away from the FL workers. To address this issue, an Hierarchical Federated Learning (HFL) framework has been introduced, incorporating an additional layer of edge servers to relay communication between the FL server and workers. While the HFL framework improves the communication between the FL server and workers, large number of communication rounds may still be required for model convergence, particularly when FL workers have non-independent and identically distributed (non-IID) data. Moreover, the FL workers are assumed to fully cooperate in the FL training process, which may not always be true in practical situations. To overcome these challenges, we propose a synthetic-data-empowered HFL framework that mitigates the statistical issues arising from non-IID local datasets while also incentivizing FL worker participation. In our proposed framework, the edge servers reward the FL workers in their clusters for facilitating the FL training process. To improve the performance of the FL model given the non-IID local datasets of the FL workers, the edge servers generate and distribute synthetic datasets to FL workers within their clusters. FL workers determine which edge server to associate with, considering the computational resources required to train on both their local datasets and the synthetic datasets. The simulation results show that an evolutionary equilibrium is reached where the FL workers do not have incentive to change their edge association strategies. Given this equilibrium, the FL workers facilitate the FL training of the edge servers that they associate with and be rewarded for their contributions. The proposed framework achieves higher FL model accuracy with an addition of 5\% of synthetic data.
\end{abstract}

\begin{IEEEkeywords}
Hierarchical federated learning, synthetic data, non-IID, edge intelligence, game theory
\end{IEEEkeywords}

\section{Introduction}
\label{sec:intro}

In recent years, Artificial Intelligence (AI) has achieved remarkable progress across various industries, including transportation, healthcare, manufacturing, and finance. Advancements in hardware, improvements in AI algorithms, and the exponential growth of data have enabled AI models to identify patterns and make decisions that closely mimic human behavior. For instance, autonomous mobile robots can navigate warehouses independently, accurately placing items on designated shelves without human intervention. Additionally, AI-powered models assist in validating medical diagnoses and predicting diseases by analyzing patient imagery~\cite{endargiri2020automatic}.

Traditionally, AI models are trained on centralized servers, requiring local devices to transfer their data to a central server. However, with increasingly stringent data privacy regulations such as the General Data Protection Regulation (GDPR) and the rising demand for communication resources to handle large-scale data transmission, FL has emerged as a key enabler for AI model training in distributed edge computing networks. FL enables collaborative learning without transferring raw user data, ensuring privacy and reducing communication overhead. FL models have been applied in various domains, including vehicle target recognition~\cite{jallepalli021federated}, traffic flow prediction~\cite{liu2020privacy}, and enhancing agricultural productivity~\cite{friha2022felids}.

Future FL networks are expected to consist of thousands of heterogeneous mobile and Internet-of-Things (IoT) devices. However, one of the main challenges of FL is communication inefficiency, especially when the FL server is physically distant from FL workers. The effectiveness of FL training is often hampered by a high dropout rate among participating devices due to limited communication resources, making it difficult for workers to transmit updated local model parameters over long distances~\cite{mcmahan2017communication}. To address this issue, the HFL framework has been introduced~\cite{lim2022decentralized, lim2021dynamic, ng2022reputation, xu2022adaptive}. The three-layer structure of HFL combines the advantages of both cloud-based and edge-based FL~\cite{liu2020client}. On one hand, HFL reduces communication costs by leveraging edge servers, which are closer to FL workers, similar to edge-based FL. On the other hand, it has a larger pool of training data by allowing large number of FL workers to participate in the FL training, similar to cloud-based FL.

While HFL effectively minimizes training latency by reducing node failures and dropout rates, several challenges remain that hinder its large-scale deployment in distributed edge networks. Firstly, FL training in HFL networks lacks sufficient incentives for edge devices to participate~\cite{pan2025bilateral}. The FL server, edge servers, and edge devices may be operated by different entities, each seeking to maximize its own utility while operating under resource constraints. As a result, edge devices may not be willing to collaborate in the FL training. Secondly, FL model convergence may be slow as the data distributions of the FL workers may vary significantly. This is because the edge devices collect data independently based on their unique preferences and sampling spaces. Furthermore, differences in sampling rates can lead to imbalanced datasets, where some edge devices possess significantly larger datasets than others. As a result, the FL model performance may degrade due to the statistical heterogeneity of local datasets~\cite{lu2024noniid, li2024iofl}.  These challenges highlight the need to address the statistical heterogeneity of non-independent and identically distributed (non-IID) data while also incentivizing edge device participation in FL training tasks.

 \begin{figure}
\centering
\includegraphics[width=\linewidth]{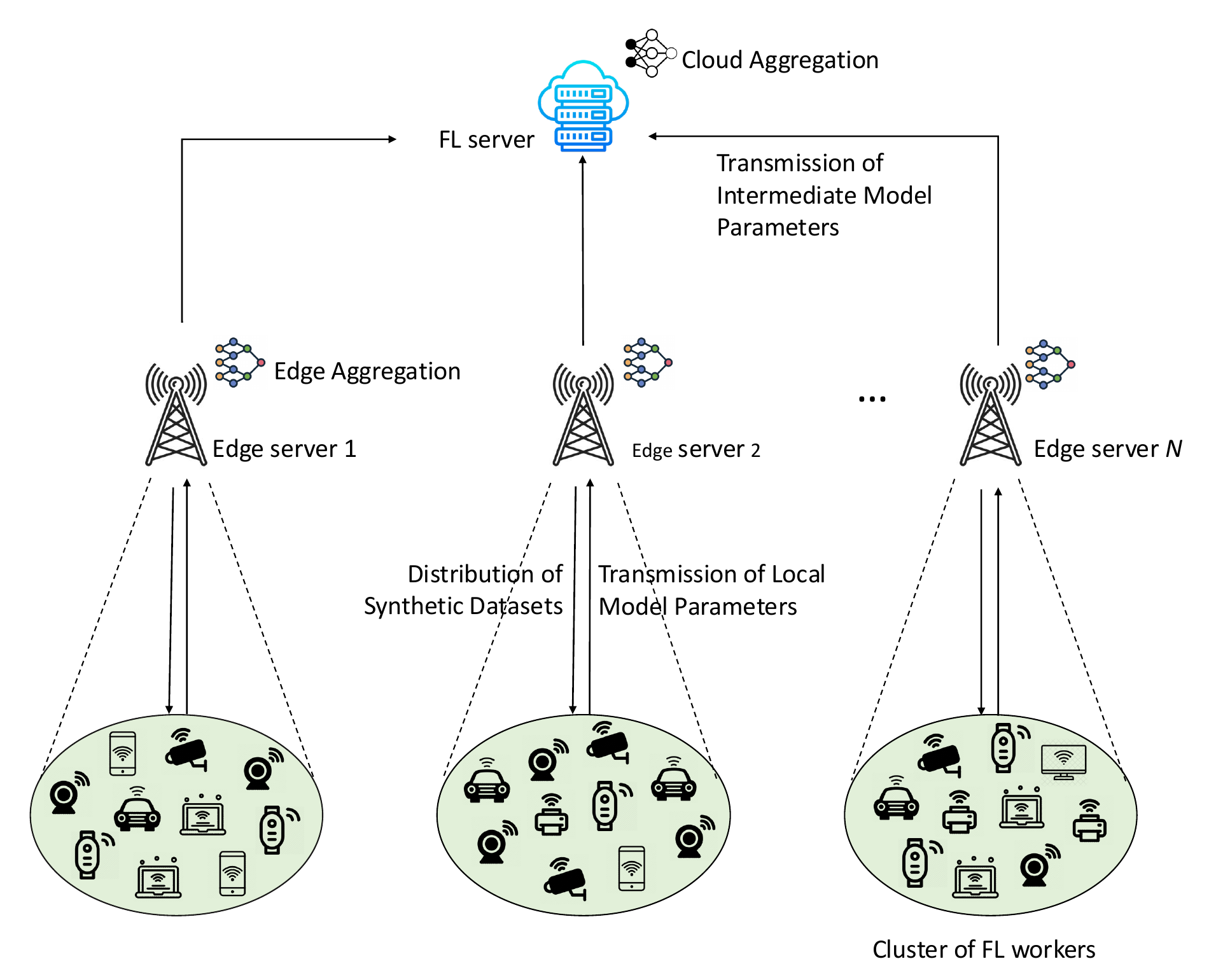}
\caption{Synthetic-data-empowered Hierarchical Federated Learning where the edge servers distribute synthetic datasets to the FL workers}
\label{fig:system}
\end{figure}

In this paper, we explore an HFL network consisting of an FL server that aims to train an FL model, such as a route recommendation system. To facilitate the training process, the FL server engages multiple edge servers (e.g., base stations), which in turn recruit nearby IoT devices as FL workers. These IoT devices transmit their locally trained model parameters to the edge servers, where intermediate model aggregation is performed. The edge servers then relay the aggregated intermediate parameters to the FL server, which conducts the final global model aggregation.

To encourage participation in the FL training process, each edge server offers a reward pool that is distributed among the FL workers in its cluster based on their contributions. Since this reward pool remains fixed regardless of the number of participants, as more FL workers join a particular cluster, the reward per worker decreases. Consequently, FL workers must strategically select an edge server that maximizes their individual utility. Over time, as workers evaluate their rewards relative to others, some may decide to switch to different edge servers to optimize their benefits. This dynamic process continues until an equilibrium is reached, meaning no worker can further increase its utility by switching edge servers. To model this evolving edge association behavior, we employ an evolutionary game framework, allowing us to determine the equilibrium composition and cluster formation of the edge servers.

A major challenge in HFL is the non-IID nature of the data collected by FL workers, as each device gathers data personalized to user behaviors and environmental conditions. This statistical heterogeneity often leads to slow model convergence and reduced performance. To address this issue, edge servers generate task-specific synthetic datasets using pre-trained AI models and distribute them to FL workers. As a result, FL workers train the FL model using both their local datasets and the synthetic datasets provided by the edge servers.

The primary goal of this work is to address the statistical challenge of non-IID data while simultaneously encouraging FL worker participation, thereby enabling the practical deployment of FL models in an HFL network. The key contributions of this work are  described as follows:

\begin{enumerate}

\item We introduce a synthetic-data-empowered HFL framework (Fig.~\ref{fig:system}) that integrates synthetic data to address the issue of non-IID data distributions among FL workers, improving model performance and convergence speed. 

\item We model the decision-making process of FL workers as an evolutionary game, where workers strategically choose edge servers based on the available rewards and required resource allocation. Given that FL workers must balance resources between training on their local data and synthetic datasets, the proposed model effectively captures their adaptive behaviors in edge server selection.

\item We conduct simulations to assess the performance of the synthetic-data-empowered HFL network. Our results demonstrate that evolutionary equilibrium is unique and stable, achieving that FL workers ultimately settle into optimal edge associations. Furthermore, the introduction of synthetic datasets significantly enhances FL model accuracy, reducing the number of training iterations needed to achieve desired performance levels under non-IID data conditions. The simulation results show that the FL model accuracy can be increased by just adding 5\% of synthetic data under the most extreme non-IID condition where each FL worker has only a single class of data samples.

\end{enumerate}

The rest of the article is organized as follows. Section~\ref{sec:related} reviews related work. Section~\ref{sec:system} presents the system model. Sections~\ref{sec:evolution} and~\ref{sec:evaluation} discuss the evolutionary game for cluster formation and the simulation results respectively. Section~\ref{sec:conclusion} concludes the paper.

\section{Related Work}
\label{sec:related}

FL is a privacy-preserving distributed machine learning approach that enables the development of personalized models without requiring users to share their sensitive personal data with a centralized server. However, in an FL network consisting of numerous distributed devices located far from the FL server, high communication overhead remains a critical challenge~\cite{mcmahan2017communication}. To tackle this, researchers have explored various communication-efficient FL strategies, including distributed network optimization methods such as Delayed Gradient Averaging~\cite{zhu2021delayed} to reduce network latency, model compression~\cite{lin2020deep} and quantization~\cite{alistarh2017qsgd} techniques to minimize bandwidth usage, user selection protocols~\cite{chen2021joint} to mitigate the straggler effects, resource allocation schemes to improve wireless network efficiency~\cite{chen2022federated, cao2022resource} and robust designs to mitigate the impact of communication noise during FL training~\cite{ang2020robust}.

With advancements in computing and communication capabilities, edge nodes (e.g., IoT devices and edge servers) are increasingly leveraged for FL over mobile edge networks~\cite{lim2020federated}. In the study of~\cite{gouissem2022robust}, the authors have proposed a serverless and robust FL mechanism where the edge nodes train a neural network collaboratively while being resilient to Byzantine attacks. This decentralized approach mitigates the issue of single point of failure and is fully dependent on the participating edge nodes to make collaborative decisions in isolating malicious nodes. In addition to reducing the reliance on the central server, the HFL framework introduced in~\cite{liu2020client} also substantially lowers communication costs by decreasing the number of communication rounds between the FL server and edge devices. Specifically, edge servers serve as an intermediate layer, performing local model aggregation before forwarding the aggregated parameters to the FL server for global aggregation.

In real-world mobile edge networks, heterogeneity is an inherent characteristic of IoT environments. For example, the edge nodes are heterogeneous in terms of their resources, e.g., bandwidth, storage capacity and computation power. In consideration of resource heterogeneity of the edge devices, resource-efficient FL schemes~\cite{lim2022decentralized, lim2021dynamic, wang2023accelerating, jung2022resource} are proposed for the edge servers to form clusters with the edge devices in the HFL network. In the work of~\cite{wang2023accelerating}, the edge devices are grouped into different clusters based on their heterogeneous training capacities. In the lower level, the edge devices transmit their local model parameters to their respective edge servers for intermediate model aggregation in a synchronous manner. In the upper level, the edge servers transmit the intermediate model parameters to the FL server for global aggregation asynchronously. Apart from device heterogeneity, statistical heterogeneity in FL arises due to the diverse data distributions of edge devices. The non-IID data distribution of the edge nodes may lead to poor performance, slow convergence and inconsistent gradient updates. In addition to the latency, energy and bandwidth consumption constraints of the edge devices, the data distributions of edge devices are also considered in the optimization formulations of the resource allocation schemes~\cite{alaa2022communication}.  Other approaches to mitigate the statistical challenge in the FL network include adopting sampling methods on the local training datasets~\cite{li2021sample, zhu2023isfl} and utilizing a learned auxiliary local drift variable to bridge the gap between the local and global model parameters~\cite{gao2022feddc}. To address the heterogeneity challenge, i.e., device, model and statistical heterogeneities, in the IoT environment, the authors in~\cite{wu2020personalized} introduced a personalized FL framework within a cloud-edge architecture, enabling each edge node to develop its own customized FL model tailored to its specific resource constraints and application needs.

Data sharing is another approach used to improve FL model performance in non-IID settings. The authors in~\cite{yoshida2020hybrid} have proposed an Hybrid-FL protocol that requires the selected edge devices to upload their data to the FL server to construct an approximate IID dataset. However, by uploading their data to the mobile edge computing operator, the FL participants may suffer from unexpected information leakage to attackers. Sensitive information, e.g., age, gender, nationality and address, can be inferred by malicious participants just by inspecting the shared FL model~\cite{melis2019exploiting}, violating the privacy-preserving principle of FL. In~\cite{zhao2018federated}, the authors have shown that the accuracy of FL models has decreased by up to 55\% under highly-skewed non-IID data distributions. To this end, they have proposed to create a small subset of data that is globally shared between all edge devices. The FL model accuracy can be increased by 30\% with only 5\% of globally shared data on the CIFAR-10 dataset. Instead of relying on globally shared data that is not always available, the work in~\cite{li2024feature} propose a hard feature matching data synthesis method for FL where each FL worker generates hard-feature-matching synthetic dataset that is shared with other FL workers. Inspired by these approaches, our proposed HFL framework incorporates synthetic data generation at the edge servers, which is then distributed to FL workers for training.

Beyond statistical challenges, FL worker participation incentives remain a significant concern. FL workers are typically unmotivated to contribute local data and may be reluctant to allocate additional computational resources to train models with synthetic datasets. In our framework, we address this issue by integrating an evolutionary game approach to optimize resource allocation and encourage participation, ensuring a fair and efficient FL training process across the HFL network.

\section{System Model}
\label{sec:system}

\begin{table}[t]
\scriptsize
  \caption{System Model Parameters.}
  \label{tab:parameters}
  \centering
  \begin{tabularx}{8.7cm}{|Sl|X|}
  \hline
  \rowcolor{mygray}
 \textbf{Parameter} & \textbf{Description} \\   \hline
$c_z$ & Computation resources for local datasets\\ \hline
$d_z$ & Amount of data quantities\\ \hline
$m_z$ & Commmunication resources\\ \hline
$s_n$ & Computation resources for synthetic datasets\\ \hline
$x_n^{(z)}$ & Population share\\ \hline
$\boldsymbol{x}^{(z)}$ & Population state\\ \hline
$J$ & Number of FL workers\\ \hline
$K$ & Number of training iterations\\ \hline
$N$ & Number of edge servers \\ \hline
$Z$ & Number of populations\\ \hline
$\alpha$ & Unit computation cost\\ \hline
$\beta$ & Unit communication cost\\ \hline
$\eta_k$ & Gradient descent step size\\ \hline
$\kappa_1$ & Number of local updates\\ \hline
$\kappa_2$ & Number of intermediate model aggregations\\ \hline
$\boldsymbol{\omega}_j^n$ & Local model parameters\\ \hline
$\boldsymbol{\omega}^n$ & Intermediate model parameters\\ \hline
$\boldsymbol{\omega}$ & Global model parameters\\ \hline
\end{tabularx}
\end{table}

We consider a distributed edge computing network comprising three key entities (i) an FL server, i.e., model owner, (ii) $J$ FL workers, i.e., IoT devices, the set of which is represented by $\mathcal{J}=\{1,\ldots, j, \ldots, J\}$ and $N$ edge servers, e.g., base stations, the set of which is represented by $\mathcal{N}=\{1,\ldots, n, \ldots, N\}$. The FL server trains a model, with parameters denoted by a vector $\boldsymbol{\omega}$, over $K$ iterations represented by the set $\mathcal{K} = \{1, \ldots, k, \ldots, K\}$ to minimize global loss $L^{K}(\boldsymbol{\omega}$). This loss function is defined by the FL server based on the specific learning task being executed. The training process is performed based on an HFL framework, which introduces an intermediate aggregation layer between the FL server and FL workers.

Unlike conventional FL, the HFL framework incorporates edge servers as an intermediate layer to aggregate local model updates before transmitting them to the FL server. This hierarchical structure enhances communication efficiency by reducing latency without compromising model accuracy~\cite{abad2020hierarchical}. Each edge server $n$ forms a cluster consisting of several FL workers.
The FL server first distributes the global FL model parameters $\boldsymbol{\omega}{(k)}$ to the edge servers. Each edge server $n$ in turn forwards these global FL model parameters to the FL workers in its cluster. Each FL worker $j$ that is associated with edge server $n$ trains the global FL model with its local dataset and derives the updated local model parameters $\boldsymbol{\omega}_j^n{(k+1)}$. Once local training is completed, the FL workers ($\forall j \in \mathcal{J}$) transmit the updated local model parameters $\boldsymbol{\omega}_j^n{(k+1)}$ to their respective edge servers. After every $\kappa_1$ training iterations, the edge servers perform intermediate aggregation to derive the updated intermediate model parameters $\boldsymbol{\omega}^n{(k+\kappa_1)}$. The updated intermediate model parameters are then redistributed to the FL workers for continued training. After every $\kappa_2$ intermediate aggregations, all edge servers ($\forall n \in \mathcal{N}$) transmit the updated intermediate model parameters $\boldsymbol{\omega}^n{(k+\kappa_1\kappa_2)}$ to the FL server for global aggregation to derive the updated global model parameters $\boldsymbol{\omega}{(k+\kappa_1\kappa_2)}$. The total training process consists of $T$ cloud intervals, each spanning $\kappa_1\kappa_2$iterations, meaning that the FL server only performs cloud aggregation every $\kappa_1\kappa_2$ training iterations instead of every iteration, significantly reducing communication overhead. The evolution of the local model parameters of FL worker $j$ that is associated with edge server $n$ at training iteration $k$, $\boldsymbol{\omega}_j^{n}(k)$, is defined as:

\begin{equation}
\label{eqn:parameters}
\boldsymbol{\omega}_j^{n}(k) = 
\begin{cases}
\boldsymbol{\omega}_j^{n}(k-1)-\eta_k\nabla F_j^n(\boldsymbol{\omega}_j^{n}(k-1)), & k|\kappa_1 \neq 0,\\[3ex]
\frac{\sum_{j \in }|D_j^n|\left[\boldsymbol{\omega}_j^{n}(k-1)-\eta_k\nabla F_j^n(\boldsymbol{\omega}_j^{n}(k-1))\right]}{|D^n|},  &k|\kappa_1=0,\\ 
& k|\kappa_1\kappa_2\neq 0,\\[3ex]
\frac{\sum_{j=1}^{J}|D_j^n|\left[\boldsymbol{\omega}_j^{n}(k-1)-\eta_k\nabla F_j^n(\boldsymbol{\omega}_j^{n}(k-1))\right]}{|D|}, & k|\kappa_1\kappa_2=0,
\end{cases}
\end{equation}
where $F_j^n$ is the local loss function of FL worker $j$ that is associated with edge server $n$ and$\eta_k$ is the gradient descent step size at iteration $k$, . $D_j^n$ is the local dataset of FL worker $j$ that is associated with edge server $n$, $D^n$ is the aggregated dataset of edge server $n$ and $D$ is the total dataset where $\bigcup_{j=1}^J D_j^n = D$, $\forall n \in \mathcal{N}$. The first, second and third caseas of Equation~(\ref{eqn:parameters}) corresponds to the local model parameters of FL worker $j$ when no aggregation, intermediate aggregation and global aggregation take place, respectively.



While the HFL framework enables FL model training without exposing the private data of FL workers, it still encounters several key challenges. Firstly, there is lack of incentives for FL workers. The FL workers may be reluctant to contribute their resources, e.g., data, computational power, and communication bandwidth, toward training the FL server’s model since there is no direct benefit or reward for their participation. Secondly, due to the naturally non-IID nature of data among FL workers, the training process can experience high latency. This statistical heterogeneity presents challenges from two key perspectives: (i) the divergence of local models, and (ii) data imbalance and overfitting. From the first perspective, since different FL workers have varying interests and usage patterns, their data distributions differ significantly. Although this setting mirrors real-world data distributions, it leads to substantial variations in local models. From the second perspective, some FL workers may have much smaller datasets than others, leading to an imbalance that can cause local models to overfit to the limited data available. These factors contribute to slow convergence and degraded FL model performance~\cite{hsieh202noniid}. As a result, while the HFL framework effectively reduces communication latency, this advantage can be negated by the increased latency arising from statistical heterogeneity in the data distribution.

In this paper, we consider that the $N$ edge servers are pre-selected by the FL server using selection algorithms based on various criteria, such as social factors~\cite{sun2020social}, resource efficiency~\cite{wu2023energy} or reputation~\cite{xin2022node}. To mitigate the statistical challenge of the HFL framework, we propose a synthetic-data-empowered HFL framework. Specifically, within each cloud interval $T$, the following steps are performed:


\begin{enumerate}

\item \emph{Distribution of Global Model: }The FL server shares the global FL model parameters denoted as $\boldsymbol{\omega}{(k)}$ with the edge servers. Each edge server $n$ in turn forwards the global FL model parameters to the FL workers. For clarity, we refer to the model parameters derived by the training of FL model on each FL worker $j$'s local dataset as the \emph{local} model parameters, the model parameters aggregated across all FL workers as the \emph{intermediate} model parameters and the model parameters aggregated across all edge servers as the \emph{global} model parameters.

\item \emph{Distribution of Synthetic Datasets: }The edge servers generate synthetic datasets related to the FL training task using pre-trained AI models. For instance, for a number classification task, a synthetic dataset equivalent to the \emph{MNIST} dataset is created by using the conditional Generative Adversarial Network (cGAN) introduced in~\cite{rahmdel2024cgan}. The cGAN model comprises a generator model that generates synthetic data and a discriminator model that distinguishes between synthetic and real data samples. For an object classification task, \emph{CIFAKE} dataset~\cite{bird2024cifake}, that is a synthetic equivalent to the \emph{CIFAR-10} dataset, is used. The \emph{CIFAKE} dataset is generated by using Latent Diffusion Models (LDMs) that model the diffusion of image data through a latent space given a textual context. Then, the edge servers ($\forall n \in \mathcal{N}$) transmit these synthetic datasets to the FL workers within their respective clusters.

\item \emph{FL Local Training: }Each FL worker $j$ trains the global FL model using its local dataset and the synthetic dataset received from its associated edge server to derive the updated local model parameters $\boldsymbol{\omega}_j^{n}{(k+1)}$. 

\item \emph{Wireless Transmission of Local Model Parameters: }The FL worker ($j \in \mathcal{J}$) transmits the updated local parameters $\boldsymbol{\omega}_j^{n}{(k+1)}$ to its corresponding edge server upon completion of its local FL model training.

\item \emph{Aggregation of Intermediate Model Parameters: }After every $\kappa_1$ training iterations, the edge server $n$ aggregates the local model parameters from all FL workers in its cluster to derive the updated intermediate model parameters $\boldsymbol{\omega}^n{(k+\kappa_1)}$. This updated intermediate model parameters are transmitted back to the FL workers for their $(k+\kappa_1)^{th}$ training iteration.

\item \emph{Aggregation of Global Model Parameters: } At the end of predefined intervals $\kappa_1\kappa_2$, all edge servers transmit the updated intermediate model parameters to the FL server to derive the updated global model parameters $\boldsymbol{\omega}{(k+\kappa_1\kappa_2)}$. This updated global model is used to conduct the next $T+1$ cloud interval of training with the edge servers and their associated FL workers.
\end{enumerate}

To enhance the quality of their intermediate models, edge servers seek to attract more FL workers to contribute valuable resources such as computation, communication, and data. To incentivize participation, each edge server provides a reward pool to the FL workers associated it. The amount of reward each FL worker receives is determined by their contributions. The more resources an FL worker provides, the larger its share of the reward pool.

We assume that each FL worker can only support the FL training of a single edge server due to limited computational resources, making it infeasible for them to participate in multiple training tasks or perform parallel computations. As a result, each FL worker must decide which edge server to associate with. While workers are inclined to select edge servers offering larger reward pools, an increase in the number of associated workers means the rewards must be divided among more participants, thus lowering the individual payoff. This interdependence among FL workers' decisions creates a dynamic environment. To capture and analyze this behavior and the resulting edge server selection strategies, we employ an evolutionary game framework. The analysis and solution of this evolutionary game are presented in Section~\ref{sec:evolution}.

\section{Evolutionary Edge Association of FL Workers}
\label{sec:evolution}

In this section, we discuss the formulation of the evolutionary game, analyze the evolutionary game and present the algorithm of the evolutionary game.

\subsection{Evolutionary Game Problem Formulation}

In the evolutionary games, the association of FL workers to different edge servers is formulated as follows:

\begin{itemize}

\item \emph{Players: }The players in this evolutionary game are the $J$ FL workers, denoted as $\mathcal{J} = \{1, \ldots, j, \ldots, J\}$ and the $N$ edge servers, denoted as $\mathcal{N} = \{1, \ldots, n, \ldots, N\}$.

\item \emph{Population: }The FL workers are divided into $Z$ populations, denoted by $\mathcal{Z} = \{1, \ldots, z, \ldots, Z\}$. using clustering techniques such as k-means. We consider that the FL workers are partitioned into different populations based on their data quantities. FL workers within the same population have the same number of data samples, and the total data quantity in population $z$ is denoted as $d_z$.

\item \emph{Strategy: }Each FL worker $j \in \mathcal{J}$ aims to maximize its utility in Equation~(\ref{eqn:utility}).They must choose an edge server $n \in \mathcal{N}$ to associate with. Note that we only consider that each FL worker is associated with a single edge server and the FL worker contributes its data and resources for the FL training of the selected edge server.

\item \emph{Population share: }The population share, $x_n^{(z)}$, refers to the proportion of FL workers in population $z \in \mathcal{Z}$ that associate with edge server $n$, where $x_n^{(z)} \in [0,1]$ and the sum of shares over all edge servers is equal to one, i.e., $\sum_{n=1}^{N}x_n^{(z)}=1$, for each population $z$.

\item \emph{Population state: }The population state refers to the proportion of FL workers' association strategies. The population shares of all FL workers of population $z$ constitute the population distribution state of population $z$, which is represented by the vector $\boldsymbol{x}^{(z)} = [x_1^{(z)}, \ldots, x_n^{(z)},\ldots, x_N^{(z)}] \in \mathbb{X}^{(z)}$, where $\mathbb{X}^{(z)}$ represents the state space containing all possible population distributions of population $z$. 

\end{itemize}

When an FL worker in population $z$ associates with edge server $n$ and contributes data for its training, the utility for that worker is defined as:

\begin{equation}
\label{eqn:utility}
u_n^{(z)} = \mathcal{U}\left(\gamma_n \frac{d_z x_n^{(z)}}{\sum_{z=1}^{Z}d_z x_n^{(z)}} - \alpha (s_n + c_z) - \beta m_z \right),
\end{equation}
where $\gamma_n$ is the reward pool from edge server $n$, distributed among the associated FL workers based on their data contributions. The term $\frac{d_z x_n^{(z)}}{\sum_{z=1}^{Z}d_z x_n^{(z)}}$ represents the FL worker's share of the reward, which is proportional to the data quantity. The term can be easily extended to include different types of resources. For simplicity, we only consider FL workers' data quantities in the distribution of the reward pool in this paper. The term $c_z$ is the computation resource available to FL workers for training on local datasets, and $s_n$ is the additional computation resource required to train on the synthetic data from edge server $n$. The term $m_z$ is the amount of communication resource of FL workers in population $z$. The terms $\alpha$ and $\beta$ are the unit costs of computation and communication resources, respectively. The utility function $\mathcal{U}(\cdot)$ represents the risk attitude of the FL workers. Without loss of generality, we consider that the FL workers are risk-neutral, i.e., $\mathcal{U}(\cdot)$ is a linear utility function~\cite{marks2016risk}.

The FL worker’s objective is to maximize its utility:

\begin{equation}
\label{eqn:max}
\begin{gathered}
\max u_n^{(z)},\\
\text{s. t. }\\
\sum_{n=1}^{N}x_n^{(z)}=1, \forall z \in \mathcal{Z}\\
x_n^{(z)} = [0,1], \forall z \in \mathcal{Z}, \forall n \in \mathcal{N}.
\end{gathered}
\end{equation}

In the HFL network, the FL workers exchange information regarding the utilities they receive from different edge servers. Based on this information, they compare their utilities with those of other FL workers in the same population and may adjust their edge server association strategy in order to achieve higher utility. A change in the strategy of an FL worker may affect the utilities of other FL workers and thus, causing other FL workers to adjust their edge server association strategies. These dynamic interactions lead to the evolution of population shares and therefore population states over time. As such, the population state and population share of population $z$ are functions of time $t$, which can be denoted as $\boldsymbol{x}^{(z)}(t)$ and $x_n^{(z)}(t)$, respectively. Accordingly, the net utility of FL workers in population $z$ for their data and resources to support the FL training of edge server $n$ at time $(t)$ is:
\begin{equation}
\label{eqn:netutility}
u_n^{(z)}(t) = \mathcal{U}\left(\gamma_n \frac{d_z x_n^{(z)}(t)}{\sum_{z=1}^{Z}d_z x_n^{(z)}(t)} - \alpha (s_n + c_z) - \beta m_z \right).
\end{equation} 

To illustrate the dynamic evolutionary behaviour of the FL workers among different strategies, i.e., associating with different edge servers over time, we define the replicator dynamics, which are expressed as a set of ordinary differential equations, as follows:
\begin{equation}
\label{eqn:replicator}
\begin{split}
\dot{x}_n^{(z)}(t) = f_n^{(z)}(\boldsymbol{x}^{(z)}(t)) = \delta x_n^{(z)}(t)(u_n^{(z)}(t)-\bar{u}^{(z)}(t)),\\
\forall z \in \mathcal{Z}, \forall n \in \mathcal{N}, \forall t,
\end{split}
\end{equation}
with initial population state $\boldsymbol{x}^{(z)}(0) = \boldsymbol{x}^{(z)}_0 \in \mathbb{X}^{(z)}$. The rate of strategy adaption is represented by the term $\delta > 0 $, which controls the speed of which the FL workers adapt their strategies. The average utility of FL workers in population $z$, $\bar{u}^{(z)}(t)$, is defined as:
\begin{equation}
\label{eqn:average}
\bar{u}^{(z)}(t) = \sum_{n=1}^{N}u_n^{(z)}(t) x_n^{(z)}(t) .
\end{equation}
Based on the replicator dynamics, on the one hand, if the utility of an FL worker in population $z$ is more than the average utility received by all FL workers in the population, i.e., $u_n^{(z)}(t)>\bar{u}^{(z)}(t)$, the population share $x_n^{(z)}(t)$ increases as more FL workers choose to associate with this particular edge server $n$, i.e., $\dot{x}_n^{(z)}(t)>0$. On the other hand, if the utility of an FL worker in population $z$ is less than the average utility received by all FL workers in the population, i.e., $u_n^{(z)}(t)<\bar{u}^{(z)}(t)$, the FL worker may adjust its strategy and switch to another edge server.

At equilibrium, when the evolution stabilizes, the population shares no longer change, meaning $\dot{x}_n^{(z)}(t) = 0$, $\forall n \in \mathcal{N}$. At this equilibrium point, all FL workers in the same population receive the same utility. Therefore, there is no motivation for any FL worker to deviate from its current strategy by associating itself with another edge server. This is the evolutionary equilibrium, i.e., $\boldsymbol{x}^{(z)*} = [x^{(z)*}_1, \ldots, x^{(z)*}_n, \ldots, x^{(z)*}_N]$.

\subsection{Evolutionary Game Analysis}

We analyze the existence, uniqueness and stability of the evolutionary equilibrium in the evolutionary game.

\subsubsection{Existence and Uniqueness of Equilibrium}

\begin{theorem}
\label{theorem:bounded}
The first-order derivatives of $f_n^{(z)}(\boldsymbol{x}^{(z)}(t))$ with respect to $x_v^{(z)}(t)$ is bounded for all $v \in \mathcal{N}$.
\end{theorem}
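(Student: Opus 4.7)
The plan is to expand $f_n^{(z)} = \delta x_n^{(z)}(u_n^{(z)} - \bar u^{(z)})$ by the product rule and show that every factor in the resulting derivative is uniformly bounded on the feasible region. Writing $S_n := \sum_{z'=1}^{Z} d_{z'} x_n^{(z')}$ for the denominator of the reward share, differentiation with respect to $x_v^{(z)}$ yields
$$\frac{\partial f_n^{(z)}}{\partial x_v^{(z)}} = \delta\, \mathbf{1}_{\{v=n\}}\bigl(u_n^{(z)} - \bar u^{(z)}\bigr) + \delta\, x_n^{(z)}\!\left(\frac{\partial u_n^{(z)}}{\partial x_v^{(z)}} - \frac{\partial \bar u^{(z)}}{\partial x_v^{(z)}}\right),$$
so it suffices to bound $u_n^{(z)}$, $\bar u^{(z)}$, and the two remaining partial derivatives.

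First I would compute $\partial u_n^{(z)}/\partial x_v^{(z)}$: the additive costs $\alpha(s_n + c_z) + \beta m_z$ are constants that drop out, and applying the quotient rule to $\gamma_n d_z x_n^{(z)}/S_n$ produces a closed-form rational expression whose numerator is polynomial in the shares (with coefficients drawn from $\gamma_n$ and the $d_{z'}$) and whose denominator is $S_n^2$. An analogous calculation handles $\partial \bar u^{(z)}/\partial x_v^{(z)} = u_v^{(z)} + x_v^{(z)}\, \partial u_v^{(z)}/\partial x_v^{(z)}$, using that $u_m^{(z)}$ depends only on the shares $x_m^{(z')}$ and hence $\partial u_m^{(z)}/\partial x_v^{(z)}$ vanishes whenever $m \neq v$. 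Boundedness of $u_n^{(z)}$ itself is immediate from the inequality $0 \le d_z x_n^{(z)}/S_n \le 1$, which gives a uniform bound depending only on the fixed parameters; $\bar u^{(z)}$ inherits the same bound through the average.

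The only real obstacle is the factor $S_n$ in the denominators: if the shares at edge server $n$ collapse to zero simultaneously across all populations, the rational expressions blow up. I would handle this by restricting the argument to the interior of the simplex, which is forward-invariant under the replicator dynamics because the vector field carries an $x_n^{(z)}$ factor; on any compact subset of the interior, each $S_n$ is bounded below by a strictly positive constant while the shares themselves lie in $[0,1]$, so every term in the displayed expression is a continuous function on a compact set, hence bounded. Stating this domain restriction explicitly at the start of the proof is the cleanest way to frame the argument, and it is precisely the setting needed to subsequently invoke Picard--Lindel\"of for uniqueness of the equilibrium trajectory.
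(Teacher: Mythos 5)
Your proposal follows essentially the same route as the paper: expand $\dot{x}_n^{(z)}=\delta x_n^{(z)}(u_n^{(z)}-\bar u^{(z)})$ by the product rule, apply the quotient rule to the reward share $\gamma_n d_z x_n^{(z)}/\sum_{z'}d_{z'}x_n^{(z')}$, and bound each resulting factor. The one substantive place where you go beyond the paper is the vanishing-denominator issue, and you are right to flag it: the paper's own displayed derivative in Equation~(\ref{eqn:decreasing}) has $\bigl[\sum_{z}d_z x_n^{(z)}\bigr]^2$ in a denominator and is therefore \emph{not} bounded on the closed product of simplices --- it blows up on the face where every population assigns zero share to server $n$ --- yet the paper simply asserts boundedness without comment. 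Your restriction to the interior (or to compact subsets of it), justified by forward-invariance of the interior under the replicator dynamics, is exactly what is needed to make the statement true and to support the subsequent Lipschitz/Picard--Lindel\"of argument in Theorem~\ref{theorem:unique}; strictly speaking one should also note that a trajectory starting in the interior need not remain in a \emph{fixed} compact subset for all time, so what you really obtain is local Lipschitz continuity on the open interior, which suffices for uniqueness of the solution but is weaker than the uniform bound the theorem nominally claims. In short, your proof is correct where the paper's is incomplete, at the modest price of an explicit domain restriction that the theorem statement itself omits.
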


\begin{proof}
For simplicity, we omit the $(z)$ and $(t)$ notations in this proof. The first-order derivative of $f_n^{(z)}(\boldsymbol{x}^{(z)}(t))$ with respect to $x_v^{(z)}(t)$, where $v \in \mathcal{N}$ is given by:
\begin{equation}
\label{eqn:partial}
\frac{\partial f_n(\boldsymbol{x})}{\partial x_v} = \delta\left[\frac{\partial x_n}{\partial x_v}(u_n-\bar{u})  +x_n\left(\frac{\partial u_n}{\partial x_v}-\frac{\partial \bar{u}}{\partial x_v}\right) \right].
\end{equation}
The derivative $\frac{\partial u_n}{\partial x_v}$ in Equation~(\ref{eqn:partial}) is computed as follows:
\begin{equation}
\label{eqn:decreasing}
\frac{\partial u_n}{\partial x_v} = \gamma_n \left(\frac{\frac{\partial x_n}{\partial x_v}d_z}{\sum_{z=1}^{Z}d_z x_n^{(z)}} - \frac{x_n d_z^2}{\left[\sum_{z=1}^{Z}d_z x_n^{(z)}\right]^2} \right).
\end{equation}
\end{proof}
This shows that the derivatives of the utilities are bounded, implying that $\left|\frac{\partial u_n}{\partial x_v}\right|$ and thus $\left|\frac{\partial \bar{u}_n}{\partial x_v}\right|$ are bounded $\forall v \in \mathcal{N}$. Therefore, the term $\left|\dot{x}_n\right| = \left|\frac{\partial u_n}{\partial x_v}\right|$ is also bounded.

Next, we prove the uniqueness of the evolutionary equilibrium.
\begin{theorem}
\label{theorem:unique}
For any initial population state $\boldsymbol{x}(0) =\boldsymbol{x}_0 \in \mathbb{X}$, there is a unique population distribution state $\boldsymbol{x}(t)$ that evolves over time according to the replicator dynamics defined in Equation~(\ref{eqn:replicator}).
\end{theorem}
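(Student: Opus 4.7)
The plan is to recognize that Equation~(\ref{eqn:replicator}) together with the initial condition $\boldsymbol{x}^{(z)}(0) = \boldsymbol{x}_0^{(z)} \in \mathbb{X}^{(z)}$ is a standard initial value problem on the probability simplex, and to invoke the Picard--Lindel\"of (Cauchy--Lipschitz) theorem. Existence and uniqueness of a trajectory reduce to establishing that the right-hand side vector field $\boldsymbol{f}^{(z)}(\boldsymbol{x}^{(z)}) = (f_1^{(z)}(\boldsymbol{x}^{(z)}), \ldots, f_N^{(z)}(\boldsymbol{x}^{(z)}))$ is Lipschitz continuous in $\boldsymbol{x}^{(z)}$ on the state space $\mathbb{X}^{(z)}$.

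First, I would leverage Theorem~\ref{theorem:bounded}, which already shows that each partial derivative $\partial f_n^{(z)}/\partial x_v^{(z)}$ is bounded on the compact convex set $\mathbb{X}^{(z)}$. By the mean value inequality, the existence of a uniform bound $M$ on all these partials immediately yields a Lipschitz constant $L$ depending on $M$ and $N$ such that $\|\boldsymbol{f}^{(z)}(\boldsymbol{x}) - \boldsymbol{f}^{(z)}(\boldsymbol{y})\| \leq L\|\boldsymbol{x}-\boldsymbol{y}\|$ for all $\boldsymbol{x}, \boldsymbol{y} \in \mathbb{X}^{(z)}$. Applying Picard--Lindel\"of then guarantees that a unique maximal solution $\boldsymbol{x}^{(z)}(t)$ exists on some open interval containing $t=0$. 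To extend it globally, I would verify the standard invariance properties of the replicator equation: the simplex constraint is preserved since
\begin{equation}
\sum_{n=1}^{N}\dot{x}_n^{(z)}(t) = \delta \sum_{n=1}^{N} x_n^{(z)}(t)\bigl(u_n^{(z)}(t)-\bar{u}^{(z)}(t)\bigr) = \delta\bigl(\bar{u}^{(z)}(t)-\bar{u}^{(z)}(t)\bigr) = 0,
\end{equation}
and the boundary face $x_n^{(z)}=0$ is invariant because $\dot{x}_n^{(z)} = 0$ whenever $x_n^{(z)} = 0$. Hence the trajectory never leaves the compact set $\mathbb{X}^{(z)}$, so the maximal interval extends to $[0,\infty)$, and the same conclusion applies independently (or jointly, after stacking into a product ODE on $\prod_{z \in \mathcal{Z}} \mathbb{X}^{(z)}$) for every population $z \in \mathcal{Z}$.

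The hard part is not Picard--Lindel\"of itself, which is textbook once Lipschitz continuity is in hand, but checking that the Lipschitz estimate obtained from Theorem~\ref{theorem:bounded} is genuinely uniform on the closed simplex and that the solution cannot exit $\mathbb{X}^{(z)}$ in finite time. Both difficulties are dissolved by the invariance computation together with compactness of $\mathbb{X}^{(z)}$, so the overall proof should be short and essentially consist of (i) citing Theorem~\ref{theorem:bounded} to obtain a Lipschitz bound, (ii) invoking Picard--Lindel\"of for local uniqueness, and (iii) using forward invariance of the simplex to extend the unique trajectory to all $t \geq 0$.
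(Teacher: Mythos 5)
Your proposal follows essentially the same route as the paper: both use the boundedness of the partial derivatives from Theorem~\ref{theorem:bounded} together with the mean value theorem to obtain a Lipschitz condition on $f_n^{(z)}$, and then invoke the standard existence--uniqueness theorem for ODEs (Picard--Lindel\"of) to conclude that the replicator dynamics admit a unique trajectory from any initial state. Your additional verification that $\sum_n \dot{x}_n^{(z)} = 0$ and that the faces $x_n^{(z)}=0$ are invariant, so the solution stays in the compact simplex and extends to all $t \geq 0$, is a step the paper's proof silently omits, and including it makes the claim of a \emph{globally} unique solution fully rigorous.
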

\begin{proof}
From Theorem~\ref{theorem:bounded}, we have established that the partial derivative of $f_n^{(z)}(\boldsymbol{x}^{(z)}(t))$ with respect to $\boldsymbol{x}(t)$ is bounded and continuous $\forall \boldsymbol{x}^{(z)} \in \mathbb{X}^{(z)}, \forall z \in \mathcal{Z} and \forall n \in \mathcal{N}, \forall t$. The maximum absolute value of these partial derivatives $f_n^{(z)}(\boldsymbol{x}^{(z)}(t))$, i.e., $\left|\frac{\partial f_n(\boldsymbol{x})}{\partial x_v}\right|$ is a Lipschitz constant~\cite{engwerda2005lq}. By the Mean Value Theorem, there exists a constant $c$ between $x_1^{(z)}(t)$ and $x_2^{(z)}(t)$ such that $\frac{\partial f_n^{(z)}(c)}{\partial x_v} = \frac{\left|f_n^{(z)}(x_1^{(z)}(t))- f_n^{(z)}(x_2^{(z)}(t))\right|}{x_1^{(z)}(t)-x_2^{(z)}(t)}$. Therefore, we can further derive the following relation:
\begin{multline}
\left|f_n^{(z)}(x_1^{(z)}(t))- f_n^{(z)}(x_2^{(z)}(t))\right| \leq \Phi \left|x_1^{(z)}(t)-x_2^{(z)}(t)\right|,\\
x_1^{(z)},x_2^{(z)} \in \mathbb{X}^{(z)}, z \in \mathcal{Z}, \forall t,
\end{multline}
where $\Phi = \max\left\{\left|\frac{\partial f_n^{(z)}(c)}{\partial x_v}\right|\right\}$. Therefore, $f_n^{(z)}(\boldsymbol{x}^{(z)}(t))$ satisfies the global Lipschitz condition~\cite{engwerda2005lq}. Hence, the replicator dynamics  $f_n^{(z)}(\boldsymbol{x}^{(z)}(t))$ described in Equation~(\ref{eqn:replicator}) with initial condition $\boldsymbol{x}(0) =\boldsymbol{x}_0 \in \mathbb{X}$ has a globally unique solution. 
\end{proof}

From Theorems~\ref{theorem:bounded} and~\ref{theorem:unique}, can conclude that the evolutionary equilibrium, representing a stationary point of the replicator dynamics, not only exists but is also unique. The equilibrium can be derived by solving Equation~(\ref{eqn:replicator}) when it is equal to zero, i.e., $\dot{x}_n^{(z)}(t) = f_n^{(z)}(\boldsymbol{x}^{(z)}(t)) = 0, \forall n \in \mathcal{N}$. All FL workers in a population receive the same utilities and are not better off by deviating from the equilibrium strategy at this evolutionary equilibrium. Therefore, they do not have incentive to associate with any other edge server.

\subsubsection{Stability of Evolutionary Equilibrium}

There are two types of equilibriums in the evolutionary game: boundary equilibrium and interior equilibrium. The boundary equilibrium satisfies the following conditions: (i) $x_{n'}^{(z)}(t) = 0$, and (ii) $x_n^{(z)}(t) = 1$, $\forall n' \neq n \in \mathcal{N}$. In particular, all FL workers in population $z$ associate with a single edge server $n$ and do not choose other edge servers. For the interior equilibrium, the population share is between 0 and 1, i.e., $x_n^{(z)}(t) = [0,1]$, $\forall n \in \mathcal{N}$. The boundary equilibrium is unstable, as even minor disturbances cause the system to shift away from this equilibrium state.

Next, we evaluate the stability of the interior evolutionary equilibrium by using the concept of Lyapinov stability~\cite{sastry1999lyapunov}. According to~\cite{slotine1991applied}, a scalar function $G(t)$ of state $t$, with continuous first-order derivatives, is globally asymptotically stable if the following conditions hold:
\begin{itemize}
\item $G(t) \rightarrow \infty$ when $\|t\| \rightarrow \infty$,
\item $G(t)$ is positive definite,
\item $\dot{G}(t)$ is negative definite.
\end{itemize} 

\begin{theorem}
For any initial population state $\boldsymbol{x}(0) \in \mathbb{X}$ where $0 < x_n^{(z)}(0) < 1$, $\forall n \in \mathcal{N}$, the interior evolutionary equilibrium is globally asymptotically stable and the replicator dynamics in Equation~(\ref{eqn:replicator}) converge to this interior evolutionary equilibrium.
\end{theorem}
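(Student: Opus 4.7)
My plan is to carry out the Lyapunov programme that the paper has set up, using the relative entropy of the equilibrium to the current state as the candidate Lyapunov function. Having fixed the unique interior equilibrium $\boldsymbol{x}^{(z)*}$ guaranteed by Theorems~\ref{theorem:bounded} and~\ref{theorem:unique}, I would define
\begin{equation*}
G(\boldsymbol{x}^{(z)}(t)) = \sum_{n=1}^{N} x_n^{(z)*} \ln\frac{x_n^{(z)*}}{x_n^{(z)}(t)}.
\end{equation*}
This is the classical choice for replicator dynamics and has the right structural properties for the paper's Lyapunov test: by Gibbs' inequality it is positive definite on the probability simplex with minimum $0$ at $\boldsymbol{x}^{(z)*}$, and it tends to $+\infty$ whenever some $x_n^{(z)}(t)$ with $x_n^{(z)*}>0$ is driven toward $0$, which plays the role of radial unboundedness inside the compact simplex. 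Before invoking the Lyapunov test I would also record invariance of the interior: the right-hand side of~(\ref{eqn:replicator}) carries an explicit factor $x_n^{(z)}(t)$, precluding any component from reaching $0$ in finite time, and summing~(\ref{eqn:replicator}) over $n$ gives $\sum_n \dot x_n^{(z)}(t)=0$, so the simplex constraint is preserved.

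The next step is to differentiate $G$ along trajectories of~(\ref{eqn:replicator}). The replicator form yields
\begin{equation*}
\dot G = -\sum_{n=1}^{N} \frac{x_n^{(z)*}}{x_n^{(z)}(t)}\,\dot x_n^{(z)}(t) = -\delta \sum_{n=1}^{N} x_n^{(z)*}\bigl(u_n^{(z)}(t)-\bar u^{(z)}(t)\bigr).
\end{equation*}
Since $\sum_n x_n^{(z)*}=1$ and $\bar u^{(z)}(t)=\sum_n x_n^{(z)}(t)u_n^{(z)}(t)$, this collapses to $\dot G = -\delta \sum_n (x_n^{(z)*}-x_n^{(z)}(t))\,u_n^{(z)}(t)$. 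Because at the interior equilibrium all servers deliver the same utility $u_n^{(z)*}$ and $\sum_n (x_n^{(z)*}-x_n^{(z)}(t))=0$, the constant $u_n^{(z)*}$ can be subtracted inside the bracket to produce the symmetric bilinear form
\begin{equation*}
\dot G = -\delta \sum_{n=1}^{N} \bigl(x_n^{(z)*}-x_n^{(z)}(t)\bigr)\bigl(u_n^{(z)}(t)-u_n^{(z)*}\bigr).
\end{equation*}

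The main obstacle is the last step: showing that this bilinear expression is strictly negative whenever $\boldsymbol{x}^{(z)}(t)\neq\boldsymbol{x}^{(z)*}$, which is exactly the statement that the population game induced by~(\ref{eqn:netutility}) is strictly stable. I would exploit the fact that the cost terms $-\alpha(s_n+c_z)-\beta m_z$ in $u_n^{(z)}$ do not depend on the population state and therefore cancel in the difference $u_n^{(z)}(t)-u_n^{(z)*}$, leaving only the shared-reward part $\gamma_n d_z x_n^{(z)}(t)/\sum_{z'} d_{z'} x_n^{(z')}(t)$; its congestion structure (a deviation that enlarges a server's total crowd shrinks the per-unit-data share there) is what must drive the sign of each summand. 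If a direct algebraic bound proves awkward because of the coupling through the denominator, I would fall back on LaSalle's invariance principle: monotonicity of the share term already delivers $\dot G \leq 0$, and the largest invariant subset of $\{\dot G=0\}$ can be identified with $\{\boldsymbol{x}^{(z)*}\}$ using the equal-utility characterisation of evolutionary equilibrium recalled just before the theorem. Either route satisfies the three Lyapunov conditions of~\cite{sastry1999lyapunov,slotine1991applied} and hence yields global asymptotic stability, so that $\boldsymbol{x}^{(z)}(t)\to\boldsymbol{x}^{(z)*}$ for every admissible initial condition, as claimed.
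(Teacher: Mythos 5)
Your route is genuinely different from the paper's: you use the relative-entropy (Kullback--Leibler) Lyapunov function $\sum_n x_n^{(z)*}\ln(x_n^{(z)*}/x_n^{(z)}(t))$, whereas the paper uses the component-wise squared distance $G=(x_n^{(z)*}(t)-x_n^{(z)}(t))^2$. Your algebra up to $\dot G=-\delta\sum_n\bigl(x_n^{(z)*}-x_n^{(z)}(t)\bigr)\bigl(u_n^{(z)}(t)-u_n^{(z)*}\bigr)$ is correct, and your handling of simplex invariance and boundary behaviour is actually cleaner than the paper's, which treats a single coordinate at a time and never verifies radial unboundedness on the simplex. However, there is a genuine gap at the decisive step, and you flag it yourself: you never establish that the bilinear form $\sum_n(x_n^{(z)*}-x_n^{(z)})(u_n^{(z)}-u_n^{(z)*})$ is nonnegative (let alone strictly positive off equilibrium). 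Everything before that point is bookkeeping common to any replicator-dynamics stability proof; the sign of this form \emph{is} the theorem. Offering two candidate strategies (``a direct algebraic bound'' or ``fall back on LaSalle'') without executing either leaves the proof incomplete, and the LaSalle fallback still presupposes $\dot G\le 0$, which is exactly what has not been shown. Note also that your intuition that the congestion structure makes the share term decreasing in $x_n^{(z)}$ is not supported by the utility as written in Equation~(\ref{eqn:utility}): differentiating $d_z x_n^{(z)}/\sum_{z'}d_{z'}x_n^{(z')}$ with respect to $x_n^{(z)}$ gives $d_z\sum_{z'\ne z}d_{z'}x_n^{(z')}/[\sum_{z'}d_{z'}x_n^{(z')}]^2\ge 0$, so the own-share derivative is nonnegative, and the single-population restriction makes the share identically $1$. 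Any completed sign argument has to confront this.

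For comparison, the paper closes the analogous step by asserting from Equation~(\ref{eqn:decreasing}) that $\partial u_n^{(z)}/\partial x_n^{(z)}<0$ and then inferring that $(x_n^{(z)*}-x_n^{(z)})$ and $(u_n^{(z)}-\bar u^{(z)})$ always carry the same sign, so that $\dot G\le 0$ term by term. If you adopt that monotonicity claim, your proof closes in essentially one line, since $u_n^{(z)}-u_n^{(z)*}$ then has the sign opposite to $x_n^{(z)}-x_n^{(z)*}$ and every summand in your bilinear form is nonnegative; strictness off equilibrium then identifies the largest invariant set in $\{\dot G=0\}$ with the equilibrium, as you anticipate. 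So the missing ingredient is precisely a proved monotonicity (or negative-definiteness/stable-game) property of the payoff map, which in the multi-population setting with the coupled denominator $\sum_{z'}d_{z'}x_n^{(z')}$ is a substantive claim that neither you nor, arguably, the paper fully justifies.
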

\begin{proof}
The Lyapunov function is defined as follows:
\begin{equation}
G(\boldsymbol{x}^{(z)}(t)) = (x_n^{(z)*}(t)-x_n^{(z)}(t))^2.
\end{equation}
This function $G(\boldsymbol{x}^{(z)}(t))$ is always positive definite since $G(\boldsymbol{x}^{(z)}(t)) \geq 0$. As $t$ approaches infinity, $G(\boldsymbol{x}^{(z)}(t))$ also approaches infinity. 
Next, we compute the first order derivative of $G(\boldsymbol{x}^{(z)}(t))$ with respect to $t$:
\begin{multline}
\label{eqn:equilibrium}
\frac{\partial G(\boldsymbol{x}^{(z)}(t))}{\partial t} \\
\shoveleft{= \frac{\partial}{\partial t} (x_n^{(z)*}(t)-x_n^{(z)}(t))^2}\\
\shoveleft{= 2 (x_n^{(z)*}(t)-x_n^{(z)}(t))\frac{\partial(x_n^{(z)*}(t)-x_n^{(z)}(t))}{\partial t}} \\
\shoveleft{= -2 (x_n^{(z)*}(t)-x_n^{(z)}(t))\frac{\partial(x_n^{(z)})}{\partial t}}\\
\shoveleft{= -2\delta x_n^{(z)}(t) (x_n^{(z)*}(t)-x_n^{(z)}(t))(u_n^{(z)}(t)-\bar{u}^{(z)}(t))}.
\end{multline}
From Equation~(\ref{eqn:decreasing}), we observe that $\frac{\partial u_n^{(z)}}{\partial x_n^{(z)}}$ is a decreasing function of $x_n^{(z)}$, i.e., $\frac{\partial u_n^{(z)}}{\partial x_n^{(z)}} < 0$.
At equilibrium point where $u_n^{(z)}(t)=\bar{u}^{(z)}(t)$, $\frac{\partial x_n^{(z)}(t) }{\partial t} = 0$. When $u_n^{(z)}(t)<\bar{u}^{(z)}(t)$, $x_n^{(z)*}<x_n^{(z)}$, $\frac{\partial x_n^{(z)}(t) }{\partial t} < 0$. When $u_n^{(z)}(t)>\bar{u}^{(z)}(t)$, $x_n^{(z)*}>x_n^{(z)}$, $\frac{\partial x_n^{(z)}(t) }{\partial t} < 0$. 
Therefore, the first order derivative of $G(\boldsymbol{x}^{(z)}(t))$ with respect to $t$ is always less than or equal to zero, i.e.,
\begin{multline}
\frac{\partial G(\boldsymbol{x}^{(z)}(t))}{\partial t} \\
=  -2\delta x_n^{(z)}(t) (x_n^{(z)*}(t)-x_n^{(z)}(t))(u_n^{(z)}(t)-\bar{u}^{(z)}(t)) \leq 0.
\end{multline}
Since $\dot{G}(\boldsymbol{x}^{(z)}(t))$ is negative definite, this satisfies the stability conditions for Lyapunov functions. Hence, the replicator dynamics in Equation~(\ref{eqn:replicator}) converge to the globally asymptotically stable interior evolutionary equilibrium.
\end{proof}

\subsection{Algorithm of the Evolutionary Game}

Given that there are $Z$ populations of FL workers, an initial population state $\boldsymbol{x}^{(z)}(0)$ is first initialized where the FL workers are randomly assigned to one of the $N$ edge servers (line 2). The evolutionary edge association strategies of these FL workers are governed by the replicator dynamics defined in Equation~(\ref{eqn:replicator}). Each FL worker in population $z \in \mathcal{Z}$ adjusts its edge association strategy to maximize its utility. Specifically, given the current population state $\boldsymbol{x}^{(z)}(t)$, an FL worker may leave its current edge server $n$ and join another edge server $n'$ (where $n \neq n'$) if it achieves higher utility by joining edge server $n'$. The edge server association strategy is illustrated from the perspective of an FL worker in population $z$. In order to compute the replicator dynamics defined in Equation~(\ref{eqn:replicator}), we first calculate the utility of FL workers of population $z$ for associating with each edge server $n$ ($\forall n \in \mathcal{N}$)~(line 8). Based on this, the average utility of FL workers in population $z$ can be computed~(line 10). Then, the replicator dynamics and the population shares of population $z$ can also be computed~(line 12). If the $\dot{x}_n^{(z)}(t) > 0$, it indicates that the FL workers that are associated with edge server $n$ achieve a higher utility than the average utility, hence attracting more FL workers to join this particular edge server $n$~(lines 13-14). Conversely, if $\dot{x}_n^{(z)}(t) < 0$, it implies that the FL workers that are associated with edge server $n$ achieve a lower utility than the average utility, prompting them to change their edge association strategies and switch to other edge servers $n' \neq n$~(lines 15-16). The algorithm continues until the evolutionary equilibrium is reached~(line 22) where the FL workers have no incentive to change their edge association strategy.

\begin{algorithm}[t]
\caption{Algorithm for Evolutionary Game for Edge Servers' Cluster Formation.}
\footnotesize
\label{algo}
\begin{algorithmic}[1]
 \renewcommand{\algorithmicrequire}{\textbf{Input:}}
 \renewcommand{\algorithmicensure}{\textbf{Output:}}
 \REQUIRE Set of populations, $\mathcal{Z}=\{1, \ldots,z,\ldots,Z\}$, set of edge servers, $\mathcal{N}=\{1,\ldots, n,\ldots, N\}$ 
 \ENSURE Final population state $\boldsymbol{x}^{(z)*}(t) = [x^{(z)*}_1(t), \ldots, x^{(z)*}_n(t), \ldots, x^{(z)*}_N(t)]$
 
\STATE t=0
\STATE Initialize a initial population state $\boldsymbol{x}^{(z)}(0) = [x^{(z)}_1(0), \ldots, x^{(z)}_n(0), \ldots, x^{(z)}_N(0)]$

\STATE \textbf{\emph{\underline{Replicator Dynamics:}}}
 
\WHILE {$\boldsymbol{x}^{(z)*}(t) \neq \boldsymbol{x}^{(z)}(t)$}
 	\STATE Update the final population state such that $\boldsymbol{x}^{(z)*}(t)=\boldsymbol{x}^{(z)}(t-1)$
	\FOR {\textbf{each} population $z \in \mathcal{Z}$ (population share of population $z$ is associated with edge server $n$, i.e., $x ^{(z)}_n(t) \in$ population state $\boldsymbol{x}^{(z)}(t)$}
		\FOR  {\textbf{each} edge server $n \in \mathcal{N}$}
			\STATE Compute $u_n^{(z)}(t)$ 
		\ENDFOR
		\STATE Compute $\bar{u}^{(z)}(t)$
		\FOR  {\textbf{each} edge server $n \in \mathcal{N}$}
			\STATE Compute ${x}_n^{(z)}(t)$ and $\dot{x}_n^{(z)}(t)$
			\IF {$\dot{x}_n^{(z)}(t) > 0$ }
				\STATE Other FL workers in population $z$ are attracted to choose this particular edge server $n$
				\ELSIF {$\dot{x}_n^{(z)}(t) < 0$}
					\STATE FL workers in population $z$ chooses other edge servers $n' \neq n$
			\ENDIF
			\STATE Update the current population state $\boldsymbol{x}^{(z)}(t)$
		\ENDFOR
	\ENDFOR
\ENDWHILE

\RETURN Final population state $\boldsymbol{x}^{(z)*}(t) = [x^{(z)*}_1(t), \ldots, x^{(z)*}_n(t), \ldots, x^{(z)*}_N(t)]$ that is at evolutionary equilibrium and stable
\end{algorithmic}
\end{algorithm}

\section{Simulation Results}
\label{sec:evaluation}

In this section, we discuss the evolutionary game in the cluster formation of edge servers and present the evaluation of synthetic-data-empowered HFL framework. Table~\ref{tab:simulation} summarizes the values of the simulation parameters.

The HFL network consists of an FL server, 3 edge servers and 50 FL workers. The FL server trains an FL model for an image classification task. The FL server selects edge servers based on client selection algorithms~\cite{nishio2019client, abdulrahman2021fedmccs} or resource allocation schemes~\cite{lim2022decentralized,lim2021dynamic}. The edge servers, in turn,  employ the FL workers to facilitate the FL training task. The FL workers are rewarded based on their data contributions to the FL task.

\begin{table}[h]
\caption{Simulation Parameter Values.} 
\label{tab:simulation}
\centering
\renewcommand\arraystretch{1.5}
\begin{tabularx}{8.7cm}{XlX}
\hline
\hline
\textbf{Parameter}& \textbf{Values}\\ [0.5ex]
\hline
Number of FL workers, $J$ & $50$\\
Number of training iterations, $K$ & $1000$\\
Number of edge servers, $N$ & $3$\\
Amount of computation resources, $c_z$ & [10, 50]\\
Data quantity of FL worker in population $z$, $d_z$ & [2000, 4000]\\
Amount of communication resources, $m_z$ & [10, 50]\\
Amount of computation resources for edge server $n$, $s_n$ & [2, 6]\\
Unit computation cost, $\alpha$ & 0.001\\
Unit communication cost, $\beta$ & 0.001\\
Reward pool offered by edge server $n$, $\gamma_n$ & [100, 900]\\
Learning rate, $\delta$ & [0.001, 0.1]\\
\hline
\end{tabularx}
\end{table}


\subsection{Evolutionary Game}

\begin{figure}
\centering
\includegraphics[width=\linewidth]{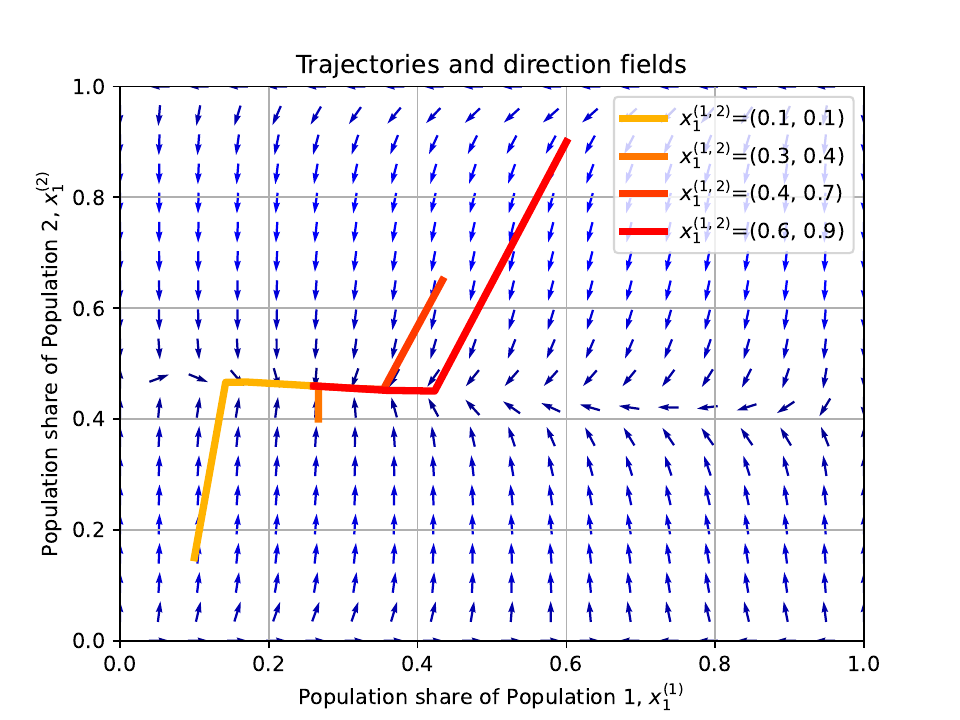}
\caption{Phase plane of the replicator dynamics.}
\label{fig:phaseplane}
\end{figure}

\begin{figure*}
     \centering
     \begin{multicols}{3}
     \begin{subfigure}[b]{\linewidth}
         \centering
         \includegraphics[width=\linewidth]{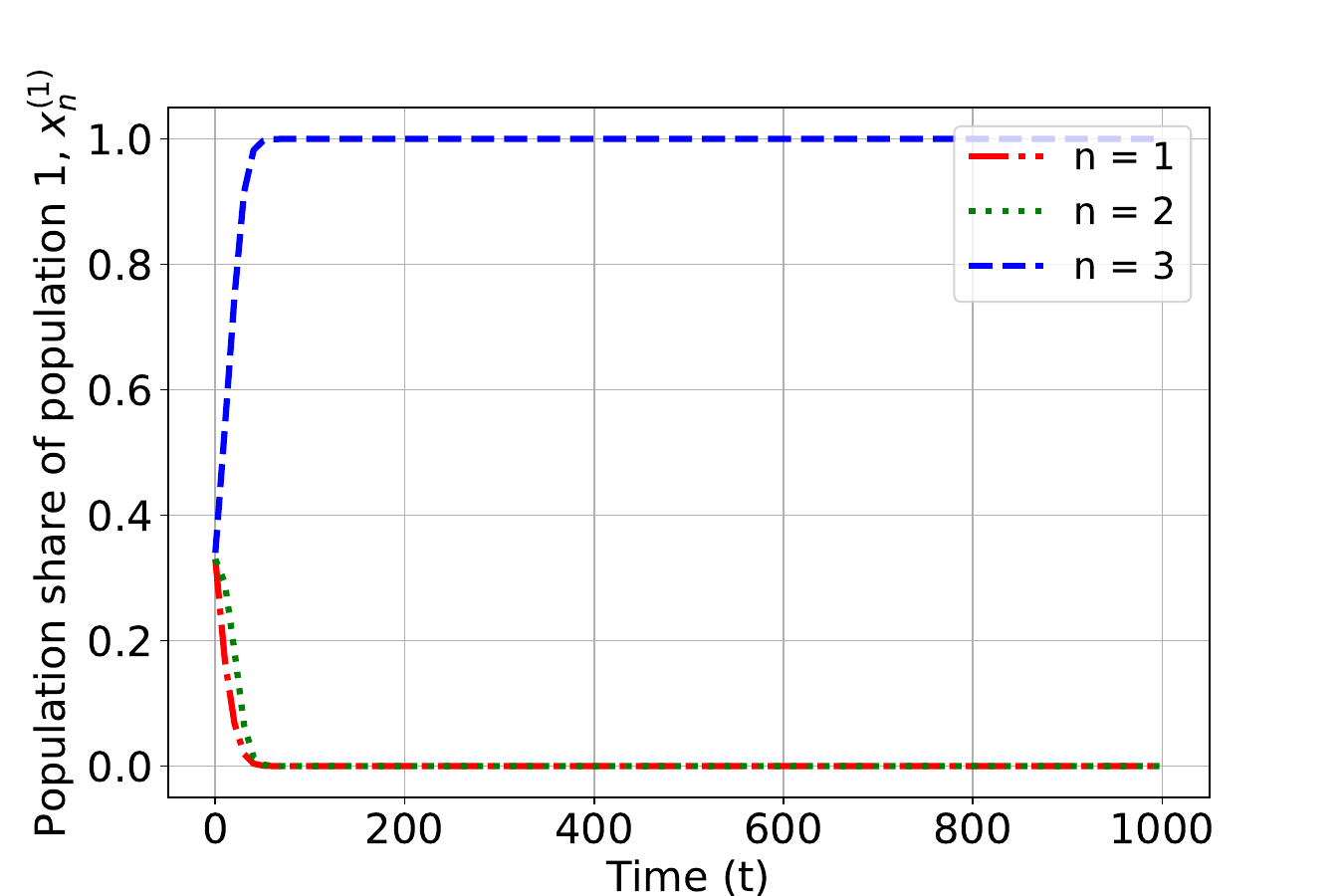}
         \caption{Population shares of population 1 for different edge servers.}
         \label{fig:population1}
     \end{subfigure}
     \begin{subfigure}[b]{\linewidth}
         \centering
         \includegraphics[width=\linewidth]{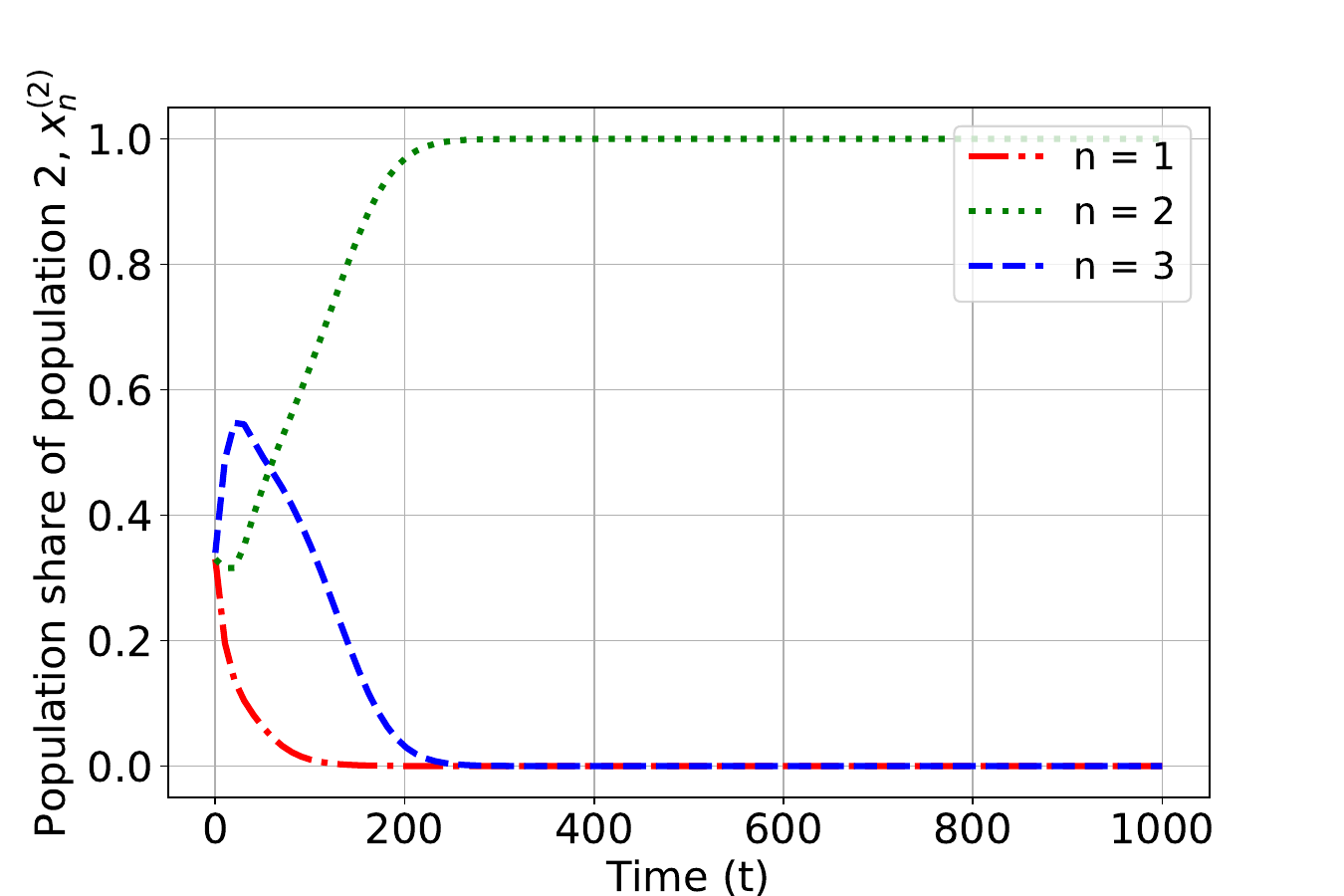}
         \caption{Population shares of population 2 for different edge servers.}
         \label{fig:population2}
     \end{subfigure}
     \begin{subfigure}[b]{\linewidth}
        \centering
         \includegraphics[width=\linewidth]{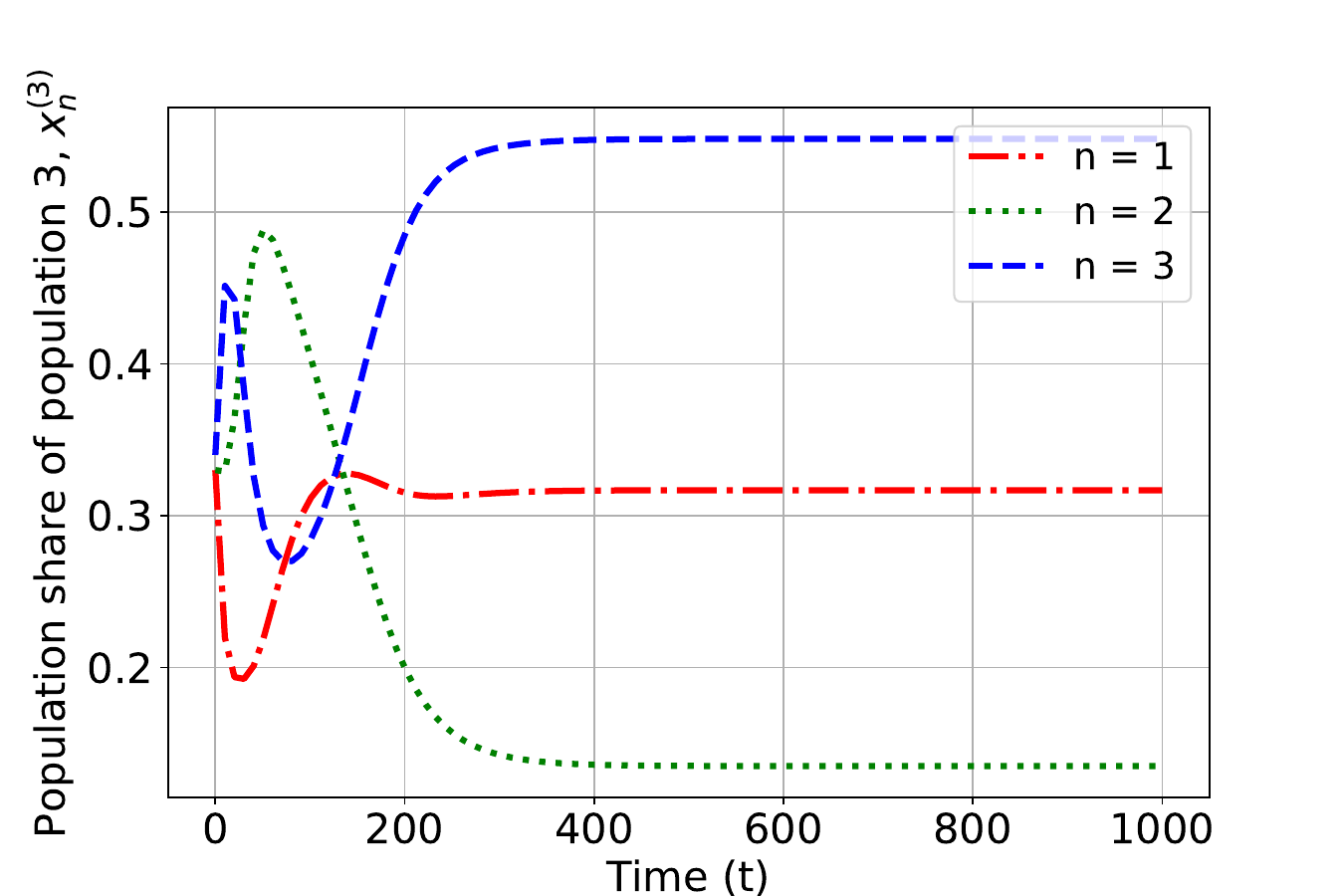}
         \caption{Population shares of population 3 for different edge servers.}
         \label{fig:population3}
     \end{subfigure}
\end{multicols}
        \caption{Population shares of different populations.}
        \label{fig:populations}
\end{figure*}

\begin{figure*}
\centering
\begin{multicols}{3}
\includegraphics[width=\columnwidth]{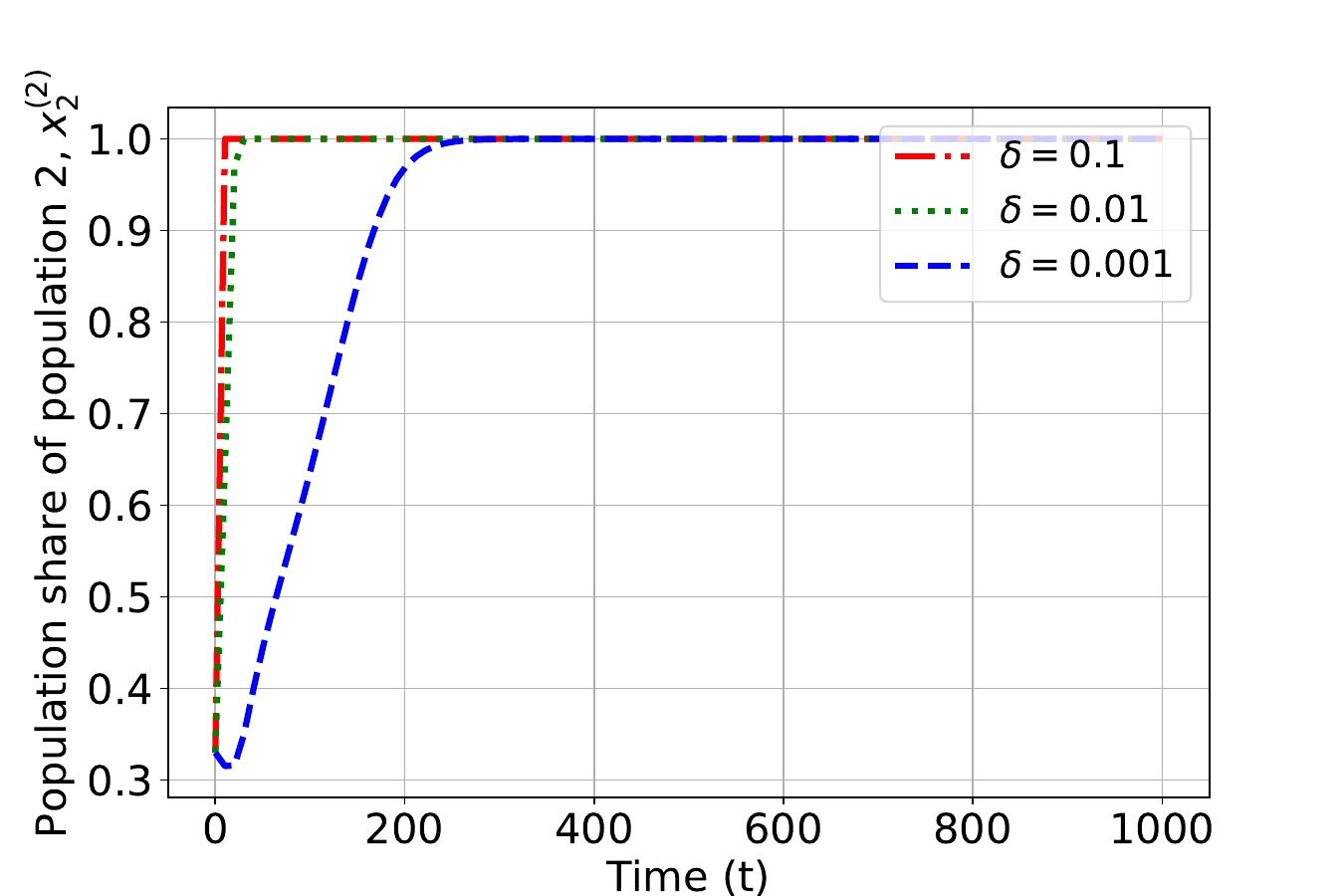}
	\caption{Population share of population $2$ given different learning rates.}
	\label{fig:learningrate}
\includegraphics[width=\columnwidth]{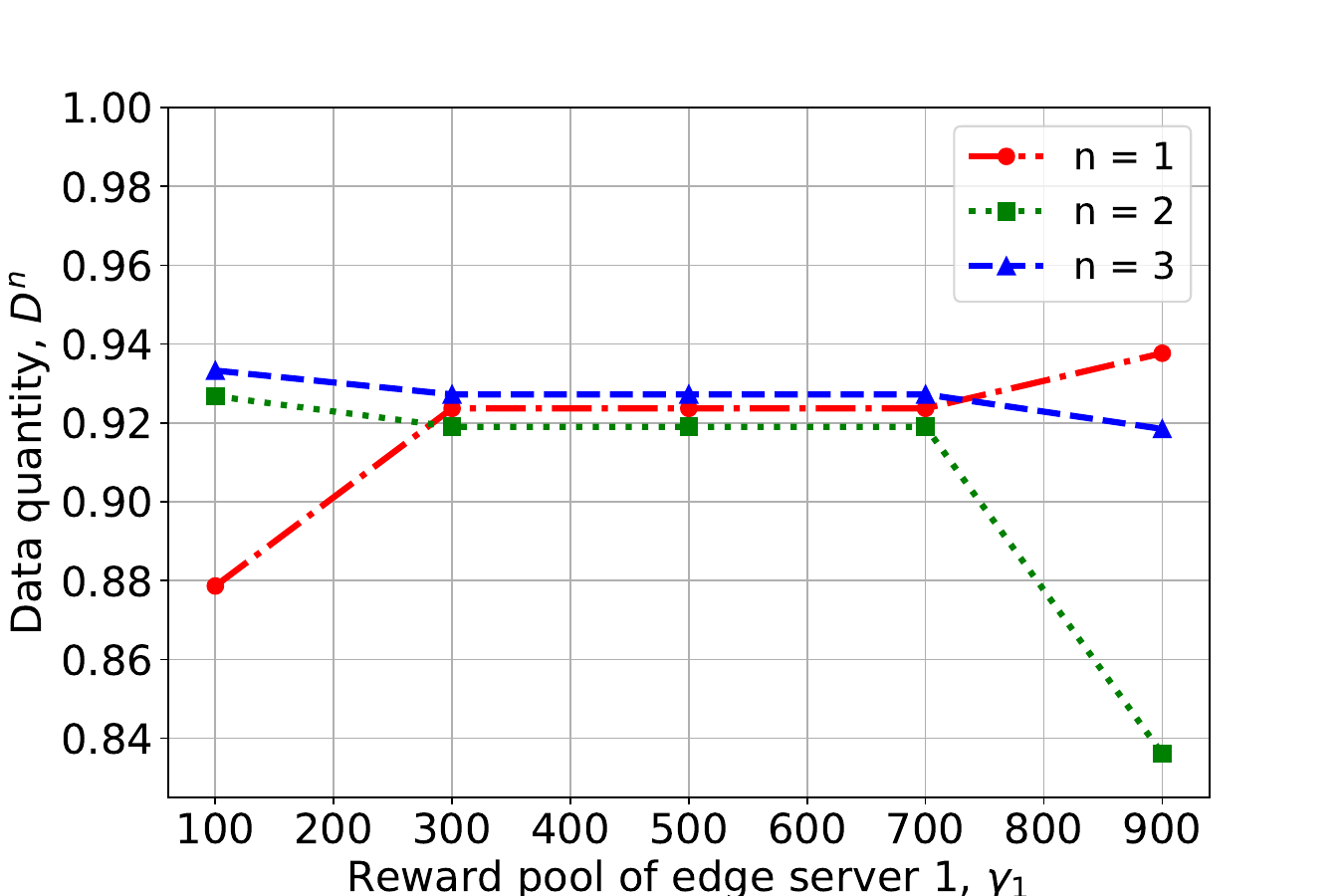} 
    	\caption{Data quantities given different reward pools.}
    	\label{fig:rewardpool} 
\includegraphics[width=\columnwidth]{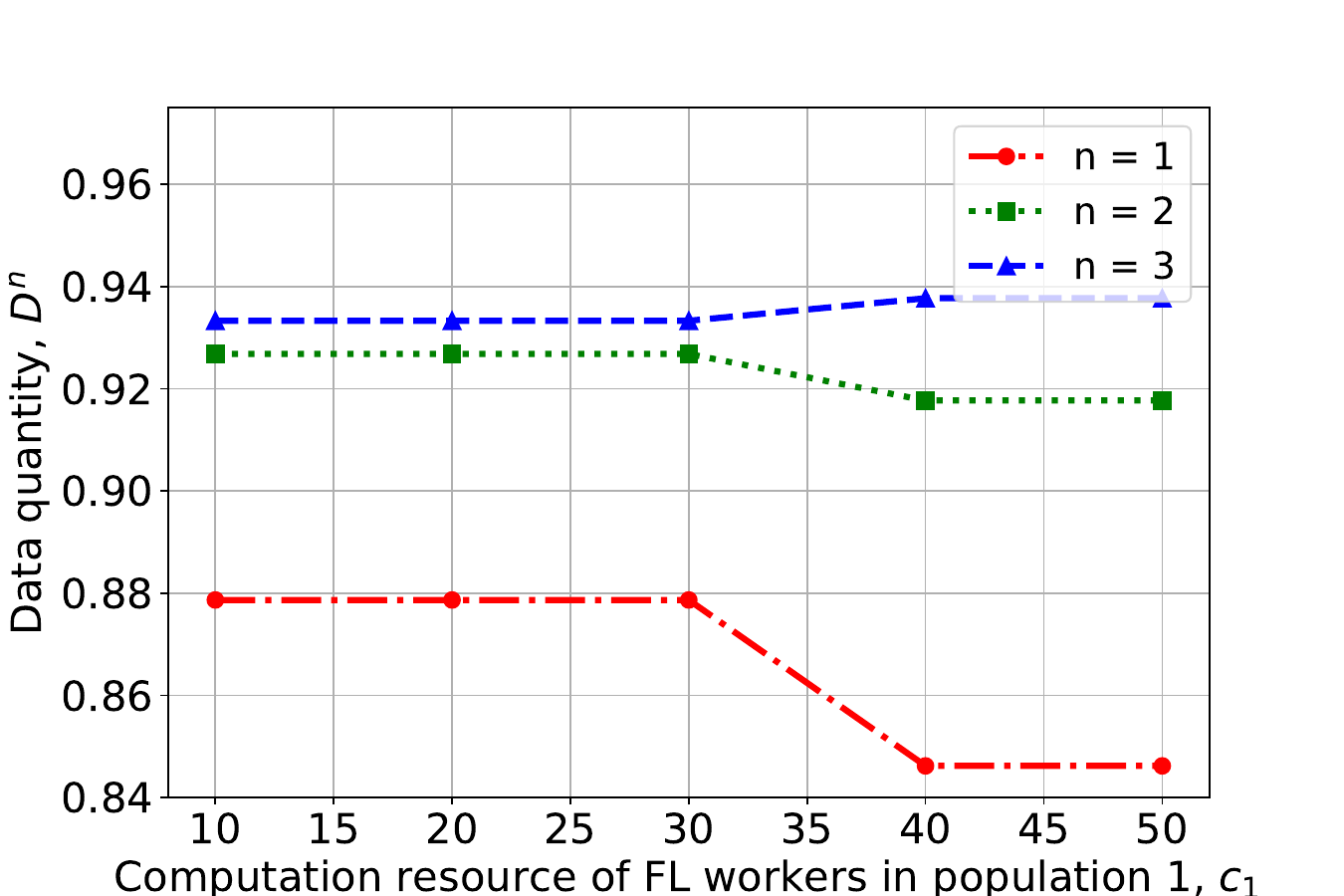} 
    	\caption{Data quantities given different computation costs.}
    	\label{fig:computationcost} 
\end{multicols}
\end{figure*}

We first analyze the phase plane of the replicator dynamics to demonstrate the uniqueness of the evolutionary equilibrium. For simplicity, we consider two populations of FL workers where FL workers in populations $1$ and $2$ have $2,000$ and $4,000$ data samples respectively, i.e., $d_1=2000$ and $d_2=4000$. The FL workers need to decide between two strategies, i.e., joining either edge server $1$ or $2$ with computation requirements of $s_1= 2$ or $s_2=4$, respectively. Figure~\ref{fig:phaseplane} shows the evolution of the population states of the two populations in joining edge server $1$ given varying initial conditions. In particular, $x_1^{(1,2)} = (0.1, 0.1)$ means $10\%$ of FL workers in population $1$ and $10\%$ of FL workers in population $2$ are associated with edge server $1$. From Fig.~\ref{fig:phaseplane}, we observe that despite the different initial conditions, e.g., $x_1^{(1,2)} = (0.6, 0.9)$ or $x_1^{(1,2)} = (0.1, 0.1)$, the evolutionary equilibrium always converges to a unique solution.  

Next, we consider a more complex HFL network with three populations of FL workers and three edge servers. The edge servers have an increasing computation requirement, i.e., $s_1=2 < s_2=4 <s_3=6$. To offset the greater computational demands placed on the FL workers, the edge servers provide progressively higher rewards. In particular, $\gamma_1=100<\gamma_2=300<\gamma_3=500$. The FL workers have different resource costs, with population $1$ being the most least expensive to employ and population $3$ being the most expensive, i.e., $c_3 = 50 > c_2 = 30 > c_1 = 10$ and $m_3 = 50 > m_2 = 30 > m_1 = 10$. The data quantities of the FL workers are fixed at $d_1=d_2=d_3=3,000$. We then study the stability of the equilibrium solution. From Figs.~\ref{fig:population1}-\ref{fig:population3}, we observe that the population shares of FL workers in all populations eventually converge with time. This implies that once the evolutionary equilibrium is reached, all FL workers do not have incentive to switch from one edge server to another since they are unable to increase their utilities by changing their edge association strategies. This confirms the stability of the evolutionary equilibrium. From Figs.~\ref{fig:population1}-~\ref{fig:population3} that show the evolution of population shares of different populations, we observe the utility-maximizing behaviour of the FL workers. Specifically, not all FL workers opt for edge server $3$, even though it offers the largest reward. For example, all FL workers in population $2$ select edge server $2$, as the higher computational demands of edge server $3$ make it less attractive despite its greater reward pool. Hence, the FL workers in population $2$ are able to achieve higher utility by choosing edge server $2$ instead of edge server $3$. Similarly, the FL workers in population $3$ are distributed among three edge servers with 31\%, 14\% and 55\% of the FL workers in population $3$ choosing edge server $1$, $2$ and $3$ respectively. This is because the reward pools of the edge servers are shared among all FL workers in their clusters. As such, when too many FL workers select a particular edge server, a crowding effect occurs. Some FL workers may achieve higher utility by choosing other edge servers.

\begin{figure*}
\centering
\begin{multicols}{4}
\begin{subfigure}[b]{\linewidth}
         \centering
	\includegraphics[width=\linewidth]{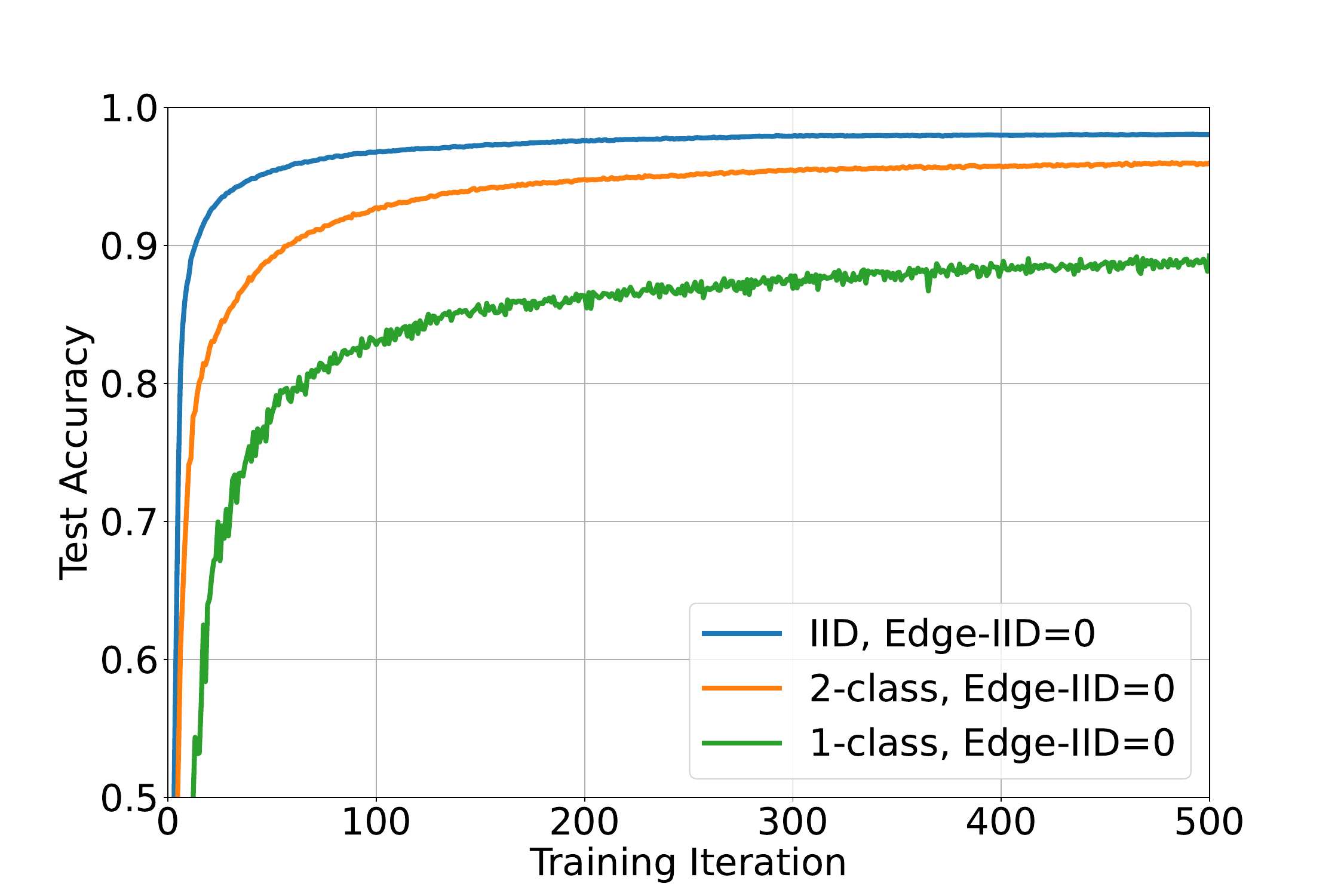}
	\caption{FL model accuracy on \emph{MNIST} dataset when edge servers are non-IID.}
	\label{fig:reduceaccuracy_edgeiid0}
	\end{subfigure}
\begin{subfigure}[b]{\linewidth}
         \centering
	\includegraphics[width=\linewidth]{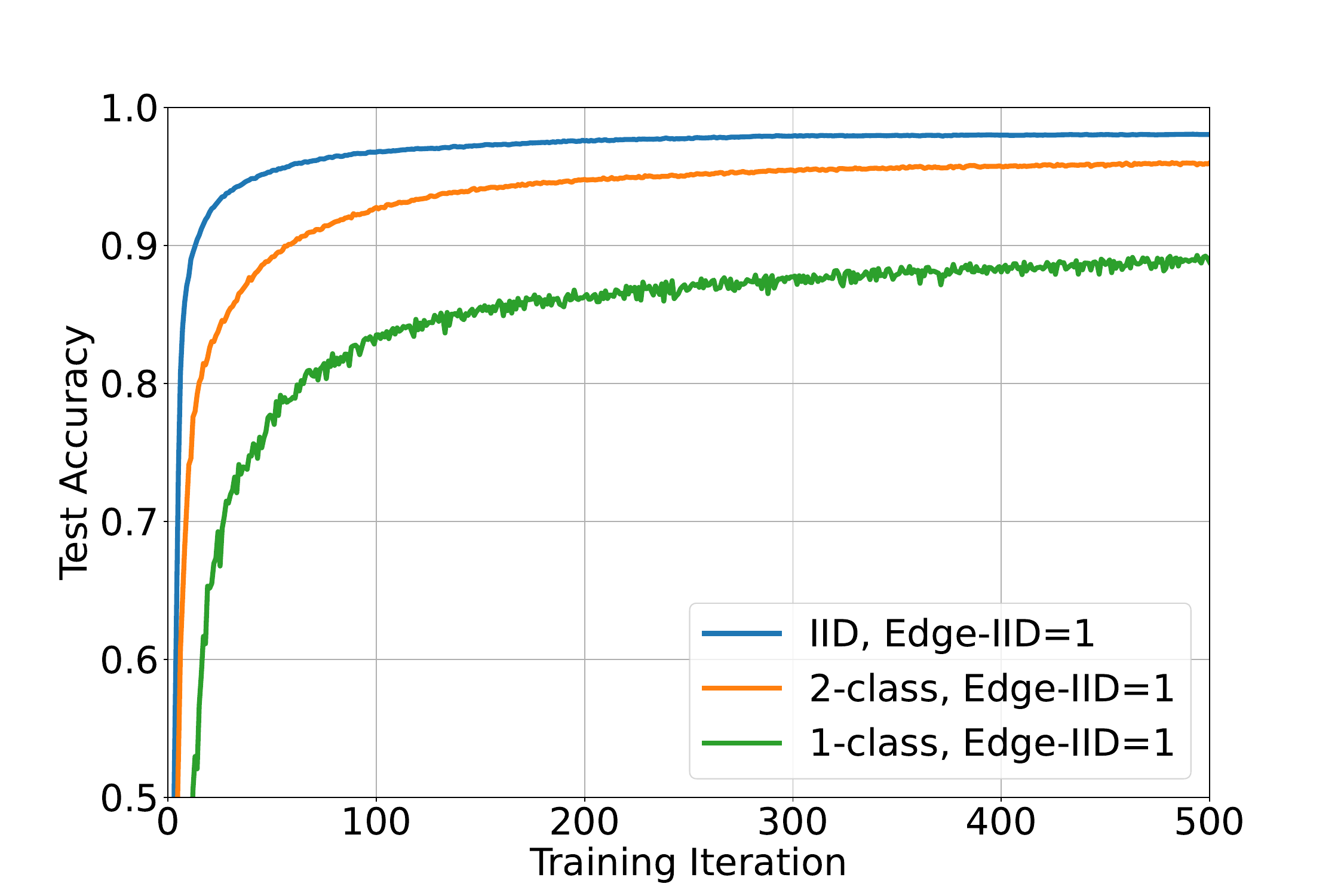}
	\caption{FL model accuracy on \emph{MNIST} dataset when edge servers are IID.}
	\label{fig:reduceaccuracy_edgeiid1}
	\end{subfigure}
\begin{subfigure}[b]{\linewidth}
         \centering
	\includegraphics[width=\linewidth]{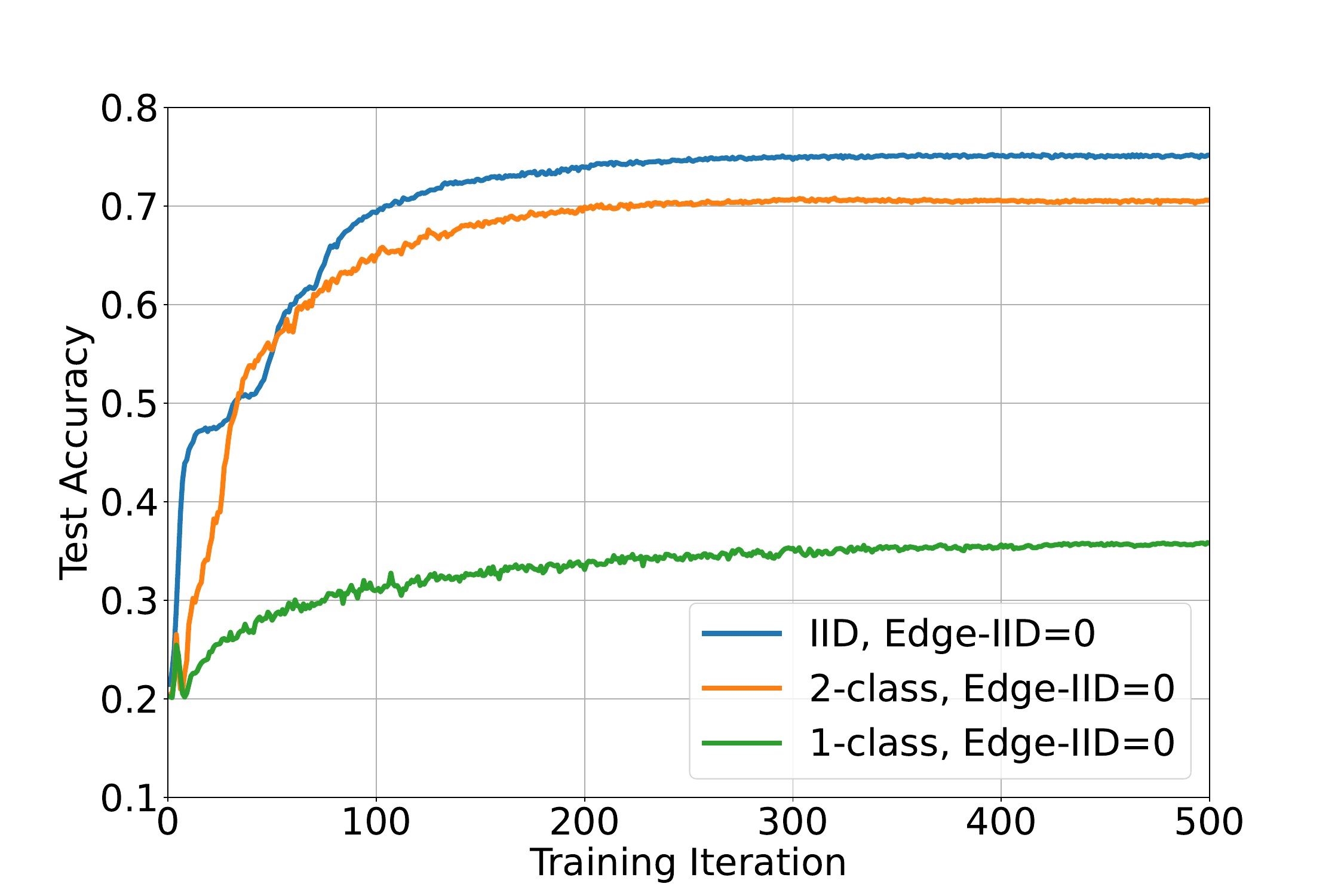}
	\caption{FL model accuracy on \emph{CIFAR-10} dataset when edge servers are non-IID.}
	\label{fig:cifar10-reduceaccuracy-edgeiid0}
	\end{subfigure}
\begin{subfigure}[b]{\linewidth}
         \centering
	\includegraphics[width=\linewidth]{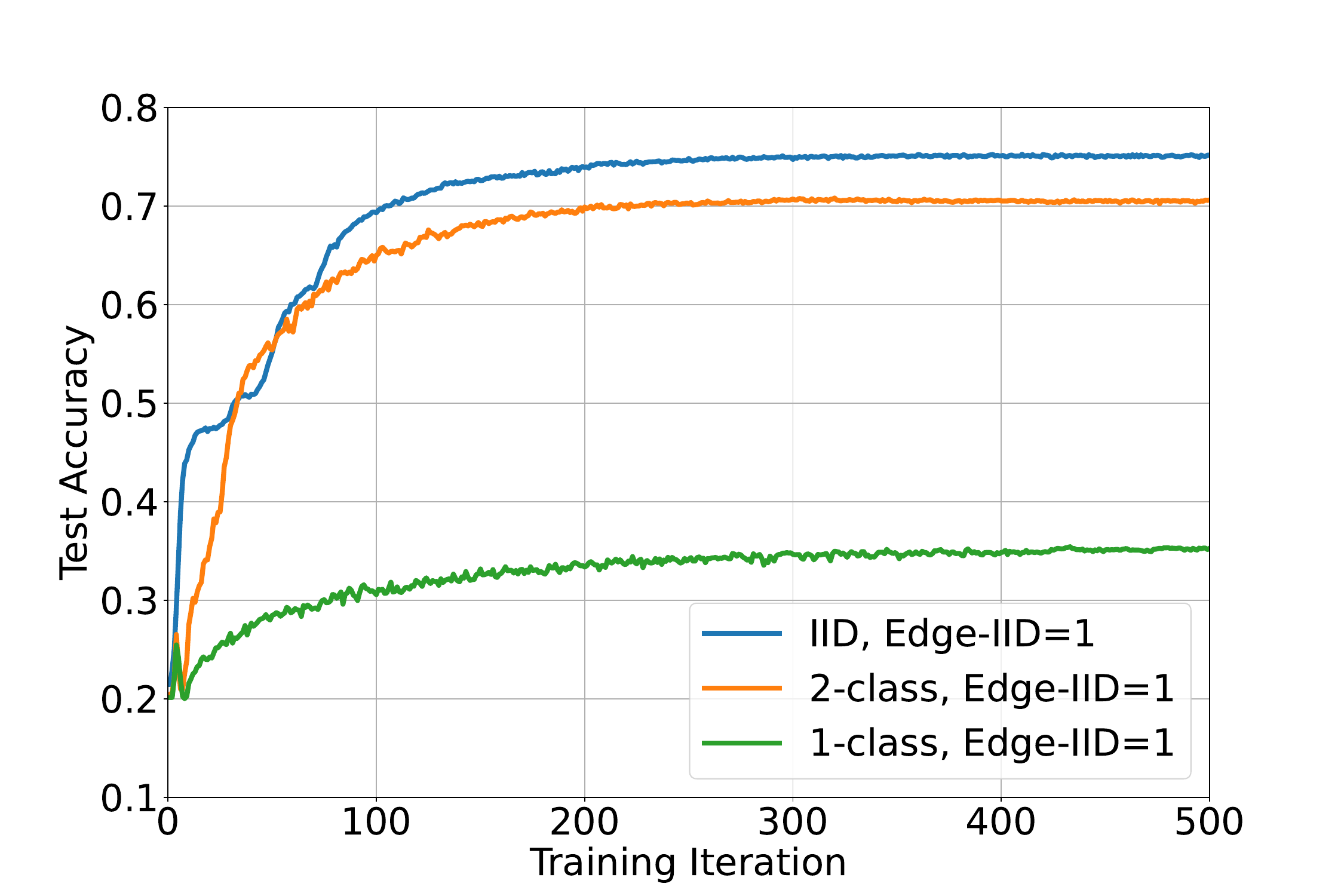}
	\caption{FL model accuracy on \emph{CIFAR-10} dataset when edge servers are IID.}
	\label{fig:cifar10-reduceaccuracy-edgeiid1}
	\end{subfigure}
\end{multicols}
	\caption{FL model accuracy given various non-IID scenarios.}
        \label{fig:nonIIDaccuracy}
\end{figure*}

\begin{figure*}
\centering
\begin{multicols}{3}
\begin{subfigure}[b]{\linewidth}
         \centering
	\includegraphics[width=\linewidth]{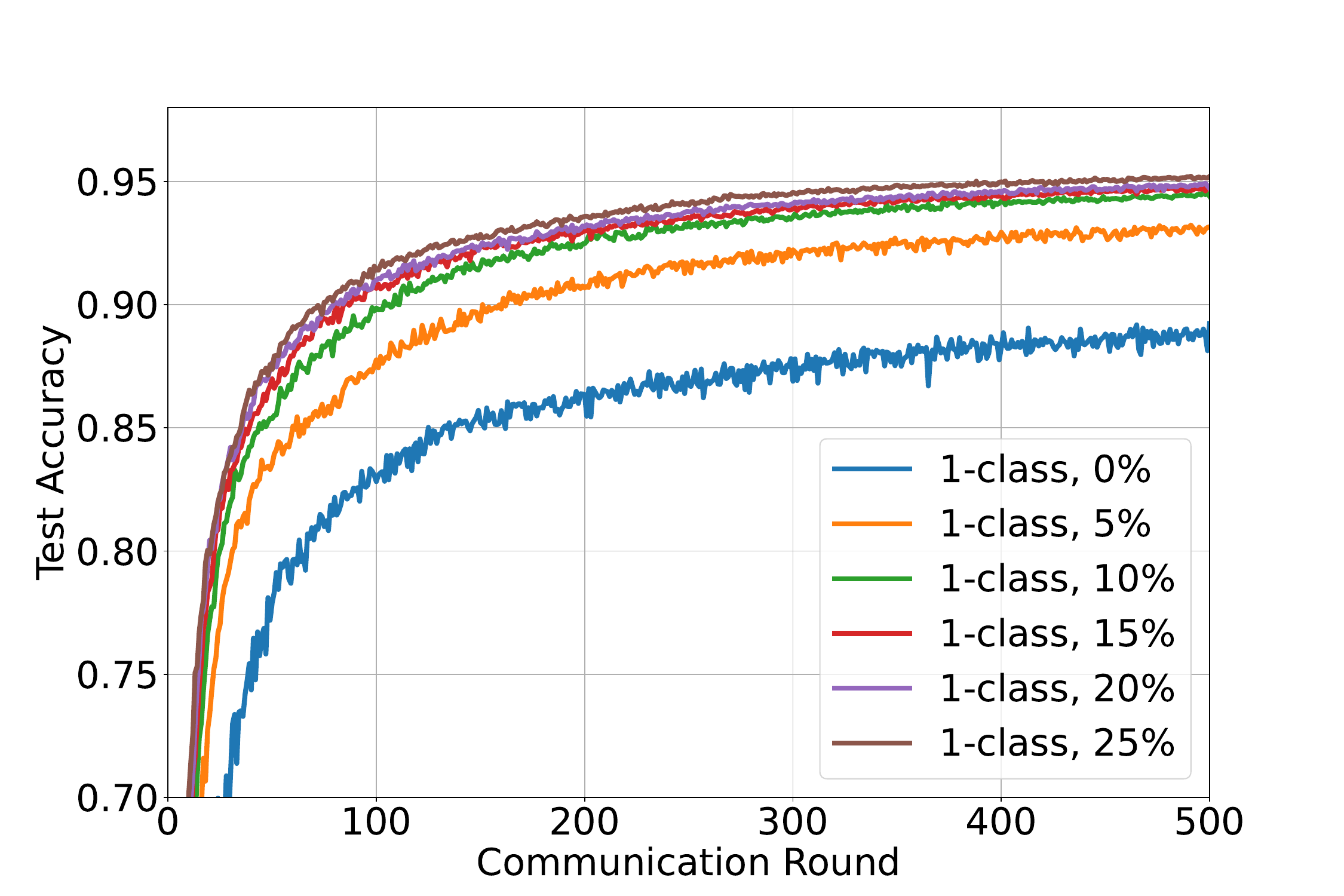}
	\caption{FL model accuracy on \emph{MNIST} dataset with varying amount of synthetic data samples when the edge servers are non-IID, each FL worker has a single class of data samples.}
	\label{fig:differentpercent1classedgeiid0}
	\end{subfigure}
\begin{subfigure}[b]{\linewidth}
         \centering
	\includegraphics[width=\linewidth]{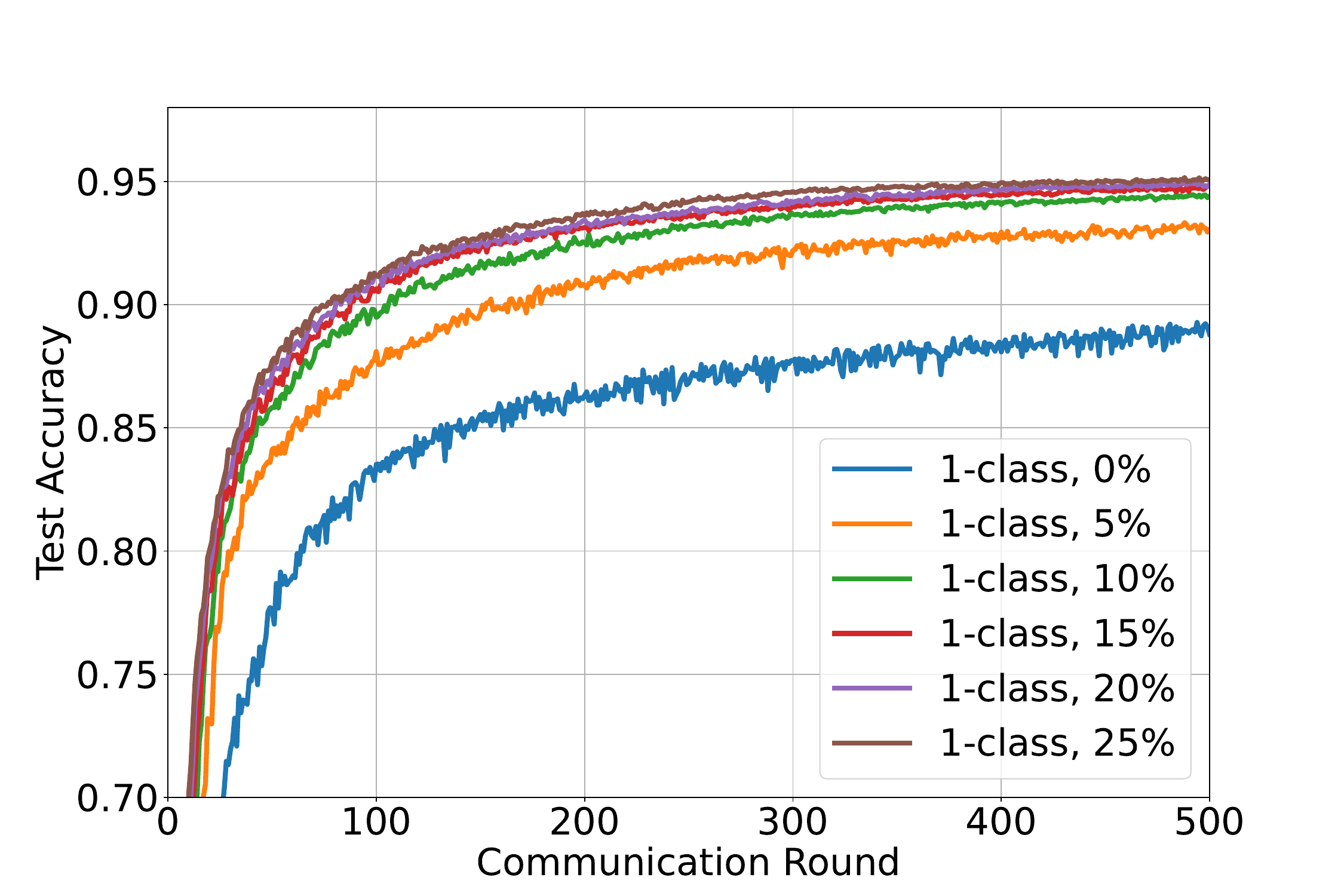}
	\caption{FL model accuracy on \emph{MNIST} dataset with varying amount of synthetic data samples when the edge servers are IID, each FL worker has a single class of data samples.}
	\label{fig:differentpercent1classedgeiid1}
	\end{subfigure}
\begin{subfigure}[b]{\linewidth}
         \centering
	\includegraphics[width=\linewidth]{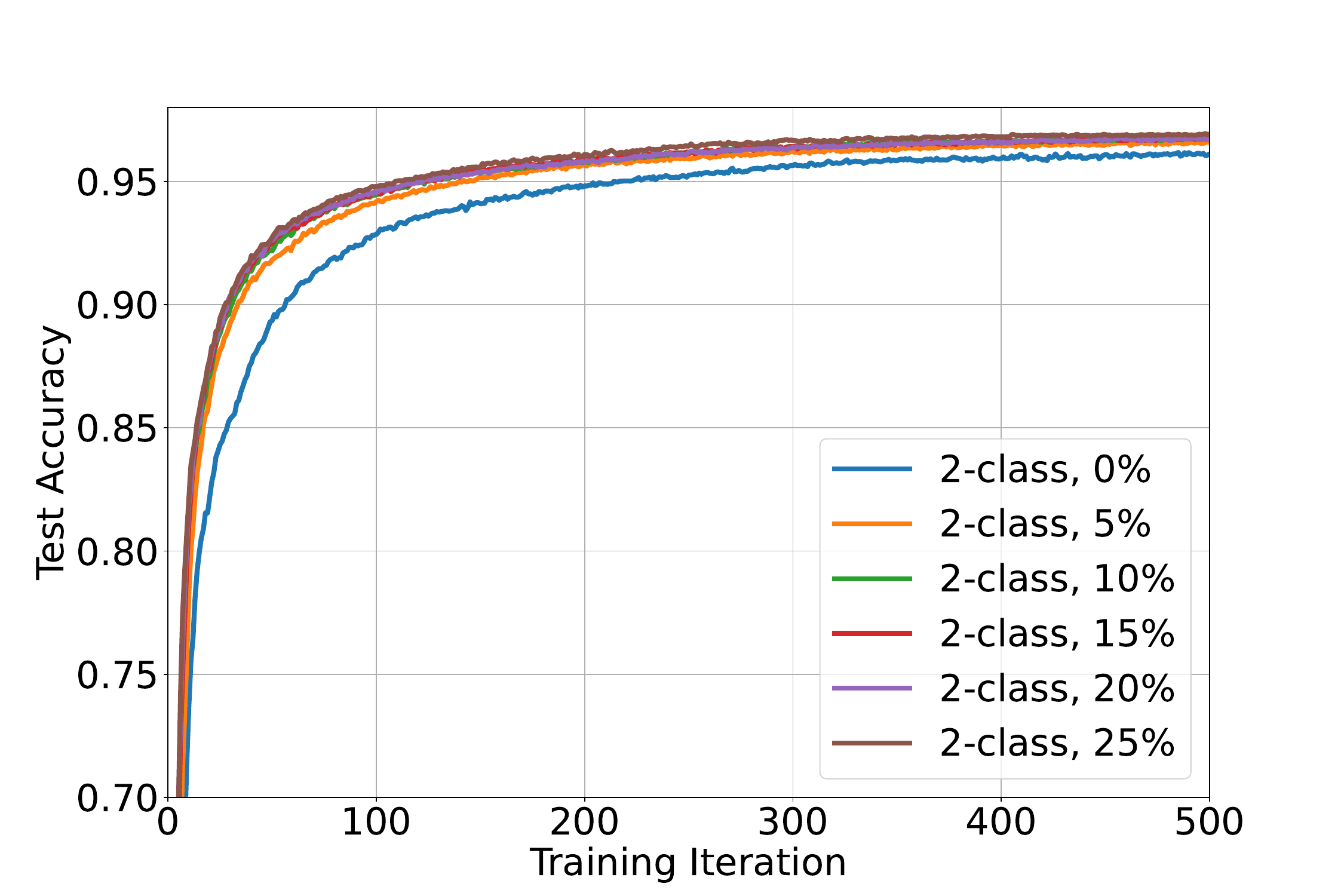}
	\caption{FL model accuracy on \emph{MNIST} dataset with varying amount of synthetic data samples when the edge servers are IID, each FL worker has two classes of data samples.}
	\label{fig:differentpercent2class}
	\end{subfigure}
\end{multicols}
	\caption{FL model accuracy on \emph{MNIST} dataset given varying amount of synthetic data samples under various non-IID scenarios.}
        \label{fig:diffferentpercentmnist}
\end{figure*}

We also study the effect of different learning rates on the evolutionary equilibrium. As shown in Fig.~\ref{fig:learningrate}, despite varying learning rates, the proportion of FL workers in population $2$ selecting edge server $2$ consistently stabilizes at the same point, highlighting the uniqueness and stability of the evolutionary equilibrium. Furthermore, we observe that the learning rate primarily affects the speed of convergence which depends on how fast the FL workers adapt their strategies, e.g., some FL workers may gain accurate information about the utilities of other FL workers. A higher learning rate results in a faster convergence rate.

Besides, we explore the effect of different reward pools and computation costs on the aggregated data quantities of edge servers. In Fig.~\ref{fig:rewardpool}, we observe that as the amount of reward pool offered by edge server $1$ increases, the aggregated data quantity of edge server $1$ increases while the aggregated data quantities of edge servers $2$ and $3$ decrease. This is because the FL workers that are associated with edge servers $2$ and $3$ may change their strategies and choose edge server $1$ instead. Figure~\ref{fig:computationcost} shows that as the computation resource required by FL workers in population $1$ to train on local datasets increases, the aggregated data quantity of edge server $1$ decreases. However, the aggregated data quantities of edge servers $2$ and $3$ decreases and increases respectively. The increase in computation resource required by FL workers in population $1$ causes a decrease in utility when the FL workers are associated with edge server $1$, prompting them to choose other edge servers. The shift in edge association strategies of these FL workers affects the utilities of other FL workers, potentially causing them to change their strategies as well. This illustrates the interdependent interactions among FL workers in the HFL network.

\subsection{Synthetic-data-empowered Hierarchical Federated Learning}

For the FL training, we use the standard 10-class handwritten digit classification dataset \emph{MNIST} and 10-class object classification dataset \emph{CIFAR-10}. The \emph{MNIST} dataset consists of $70,000$ data samples that is split into $60,000$ training samples and $10,000$ test samples. The FL model is built using a Convolutional Neural Network (CNN) with $21,840$ trainable parameters, following the architecture in~\cite{mcmahan2017communication}. For local training, each FL worker applies mini-batch Stochastic Gradient Descent (SGD) with a batch size of 20 and an initial learning rate of $0.01$, which decays exponentially at a rate of $0.995$ with each iteration. The \emph{CIFAR-10} dataset consists of $60,000$ data samples that is split into $50,000$ training samples and $10,000$ test samples. For the \emph{CIFAR-10} dataset, the FL model is trained using the CNN with $5,852,170$ parameters. Mini-batch SGD with a batch size of $20$, an initial learning rate of $0.1$ and an exponential learning rate decay of $0.992$ per training iteration is employed for the local computation of FL training of each FL worker. 


When the FL workers no longer able to increase their utilities by altering their edge server association strategies, the edge servers have successfully formed their clusters of FL workers to facilitate the FL training task. However, the non-IID nature of the data continues to pose a challenge, potentially degrading the accuracy of the FL model. This challenge can be mitigated by providing FL workers with synthetic datasets relevant to the FL training task. 


We examine two types of non-IID data distributions among FL workers. In the first type, each FL worker possesses data samples from only a single class, while in the second type, each FL worker has data samples from any two of the ten classes. The FL workers are then randomly assigned to edge servers, resulting in scenarios where the data distributions at the edge servers are either IID or non-IID.

For both cases, i.e., whether the edge servers maintain IID or non-IID distributions, we observe that the accuracy of the FL model declines as the non-IID nature of the FL workers’ data increases. This trend is evident in the \emph{MNIST} dataset, as illustrated in Figs~\ref{fig:reduceaccuracy_edgeiid0} and \ref{fig:reduceaccuracy_edgeiid1}. Similarly, for the \emph{CIFAR-10} dataset, the accuracy degradation follows the same pattern, as shown in Figs~\ref{fig:cifar10-reduceaccuracy-edgeiid0} and \ref{fig:cifar10-reduceaccuracy-edgeiid1}. These findings highlight the effect of non-IID data distributions on FL model performance, emphasizing the need for techniques such as synthetic data augmentation to mitigate accuracy loss. For simulations that involve \emph{MNIST} dataset, the synthetic dataset are generated using cGAN proposed in~\cite{rahmdel2024cgan}. Similar to the \emph{MNIST} dataset, this synthetic dataset has $70,000$ data samples where $60,000$ and $10,000$ are training and testing samples, respectively. For simulations that involve \emph{CIFAR-10} dataset, we use \emph{CIFAKE} dataset~\cite{bird2024cifake} that is generated by using LDMs. Similar to the \emph{CIFAR-10} dataset, this \emph{CIFAKE} dataset has $60,000$ data samples where $50,000$ and $10,000$ are training and testing samples, respectively.

Figures~\ref{fig:differentpercent1classedgeiid0}-\ref{fig:differentpercent2class} illustrate the effect of incorporating synthetic datasets on FL model accuracy for the \emph{MNIST} dataset across different non-IID scenarios.

\begin{itemize}
\item \emph{Scenario 1: }Each FL worker has data samples from two different classes, and the edge servers maintain an IID data distribution.
\item \emph{Scenario 2: }Each FL worker has data samples belonging to just one class, while the edge servers still have an IID distribution.
\item \emph{Scenario 3: }Each FL worker has data samples belonging to just one class, but the edge servers have a non-IID data distribution.
\end{itemize} 

\begin{figure}
\centering
\includegraphics[width=\linewidth]{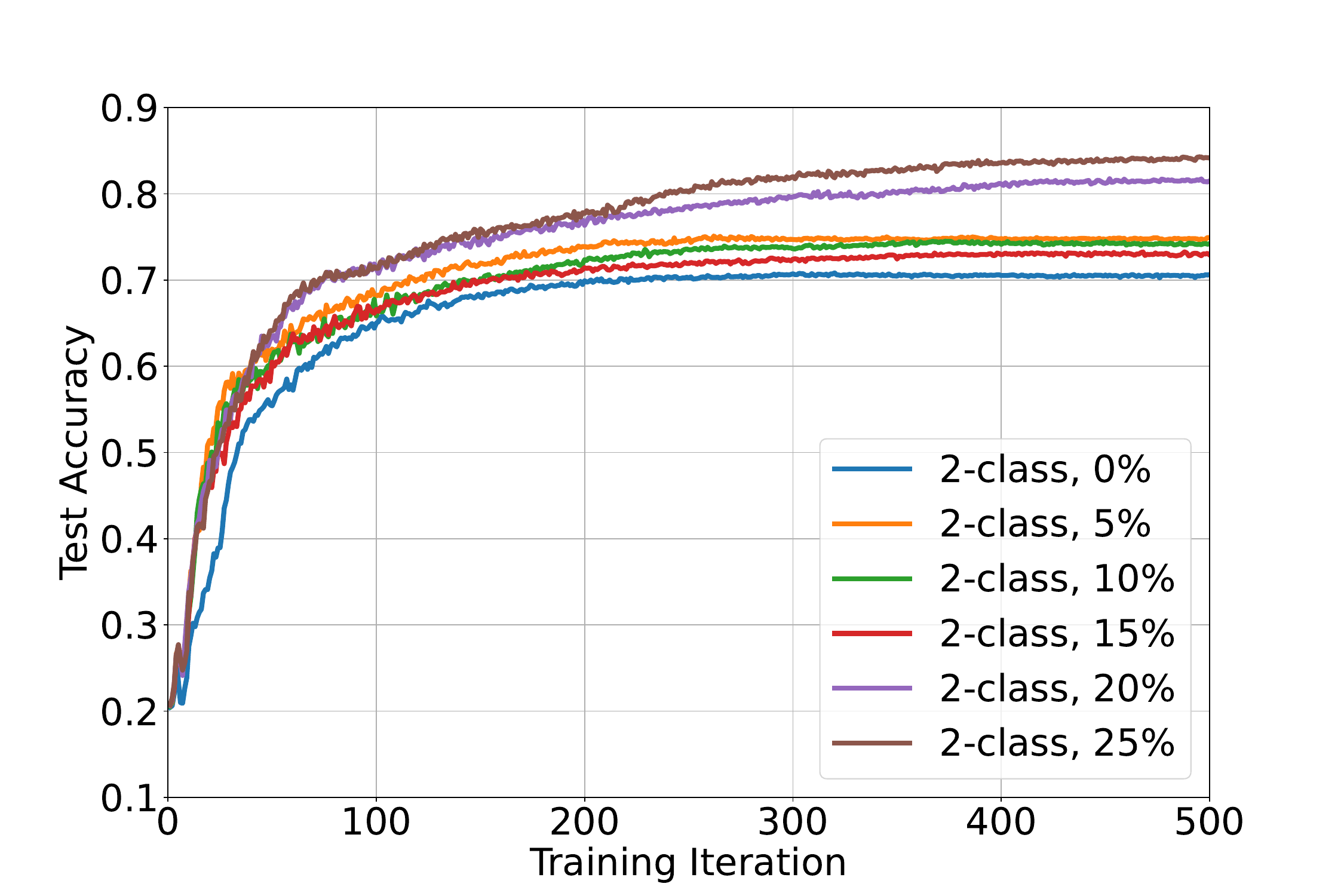}
\caption{FL model accuracy on \emph{CIFAR-10} dataset with varying amount of synthetic data samples when the edge servers are IID, each FL worker has two classes of data samples..}
\label{fig:cifardifferentpercent2class}
\end{figure}

In all three cases, adding synthetic data enhances the accuracy of the FL model on the \emph{MNIST} dataset. For instance, in Scenario~3, when no synthetic data is added ($0\%$), the FL model achieves an accuracy of $0.8923$ at the $500^{th}$ iteration. However, with just a $5\%$ addition of synthetic data, the accuracy improves to $0.9316$. The positive effect of synthetic data is particularly pronounced when each FL worker has only a single class of data, i.e., when the data distribution is more non-IID, as observed in Figures~\ref{fig:differentpercent1classedgeiid0} and~\ref{fig:differentpercent1classedgeiid1}. Similar trend is observed on the \emph{CIFAR-10} dataset. In Fig.~\ref{fig:cifardifferentpercent2class}, when the FL server has two classes of data samples and the edge servers have an IID distribution (Scenario~1), the FL model accuracy on the \emph{CIFAR-10} dataset increases from $0.7062$ when no synthetic data is added to $0.8416$ when $25\%$ of synthetic data is added at the $500^{th}$ iteration.

Additionally, we analyze how varying amounts of synthetic data affect FL model accuracy in Fig.~\ref{fig:diffferentpercentmnist}. Across all three scenarios, the most significant accuracy gain occurs with the introduction of $5\%$ synthetic data samples, especially when FL workers have only one class of data. However, adding larger amounts of synthetic data does not yield substantial further improvements. When $10\%$, $15\%$, $20\%$ or $25\%$ synthetic data samples are included, the accuracy curves nearly overlap, indicating diminishing returns. This indicates that even a modest amount of synthetic data can significantly improve the accuracy of the FL model.

\begin{figure}
\centering
\includegraphics[width=\linewidth]{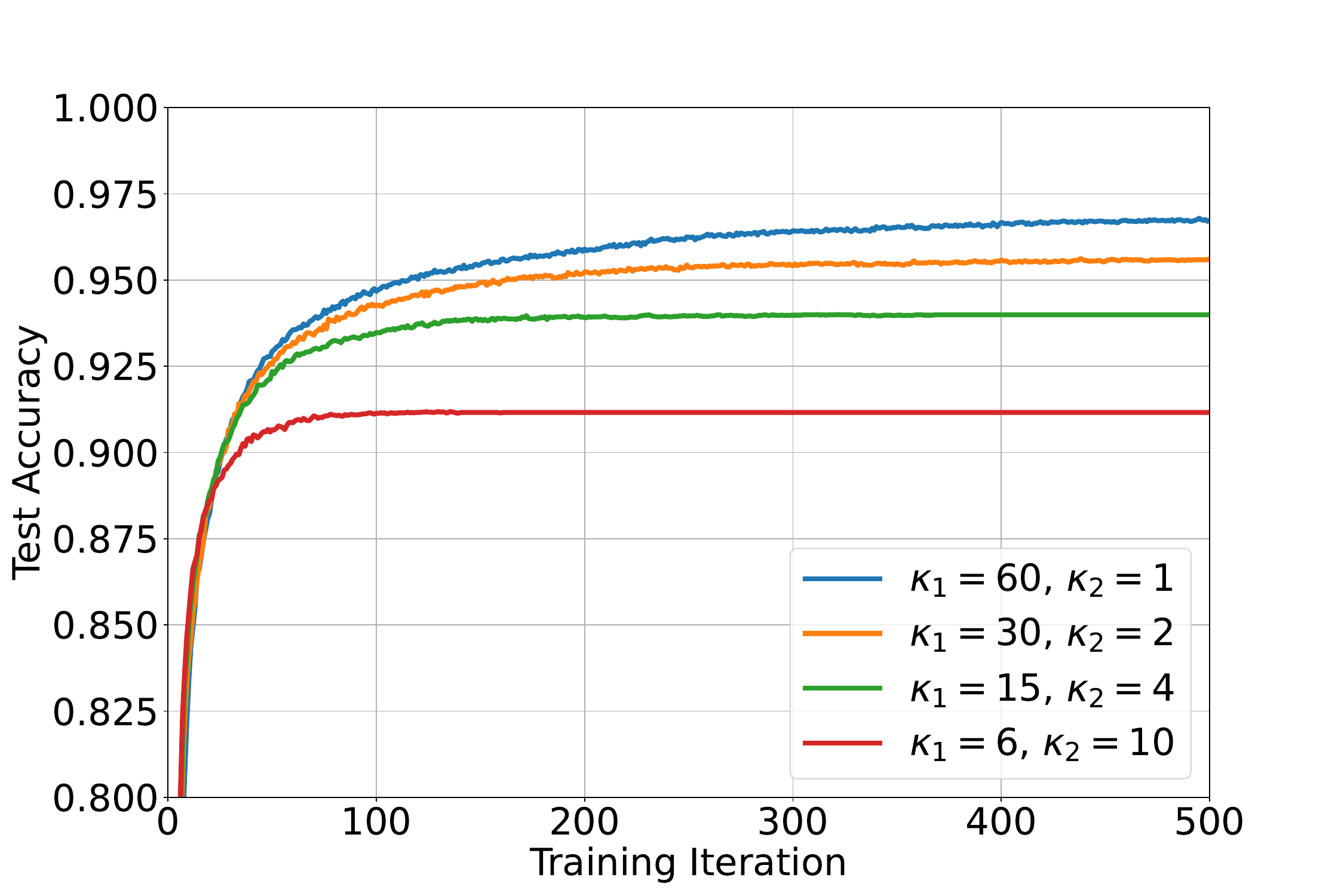}
\caption{FL model accuracy on \emph{MNIST} dataset given different values of $\kappa_1$ and $\kappa_2$ for fixed cloud interval, $\kappa_1\kappa_2$.}
\label{fig:diffkappa}
\end{figure}

\begin{figure}
\centering
\includegraphics[width=\linewidth]{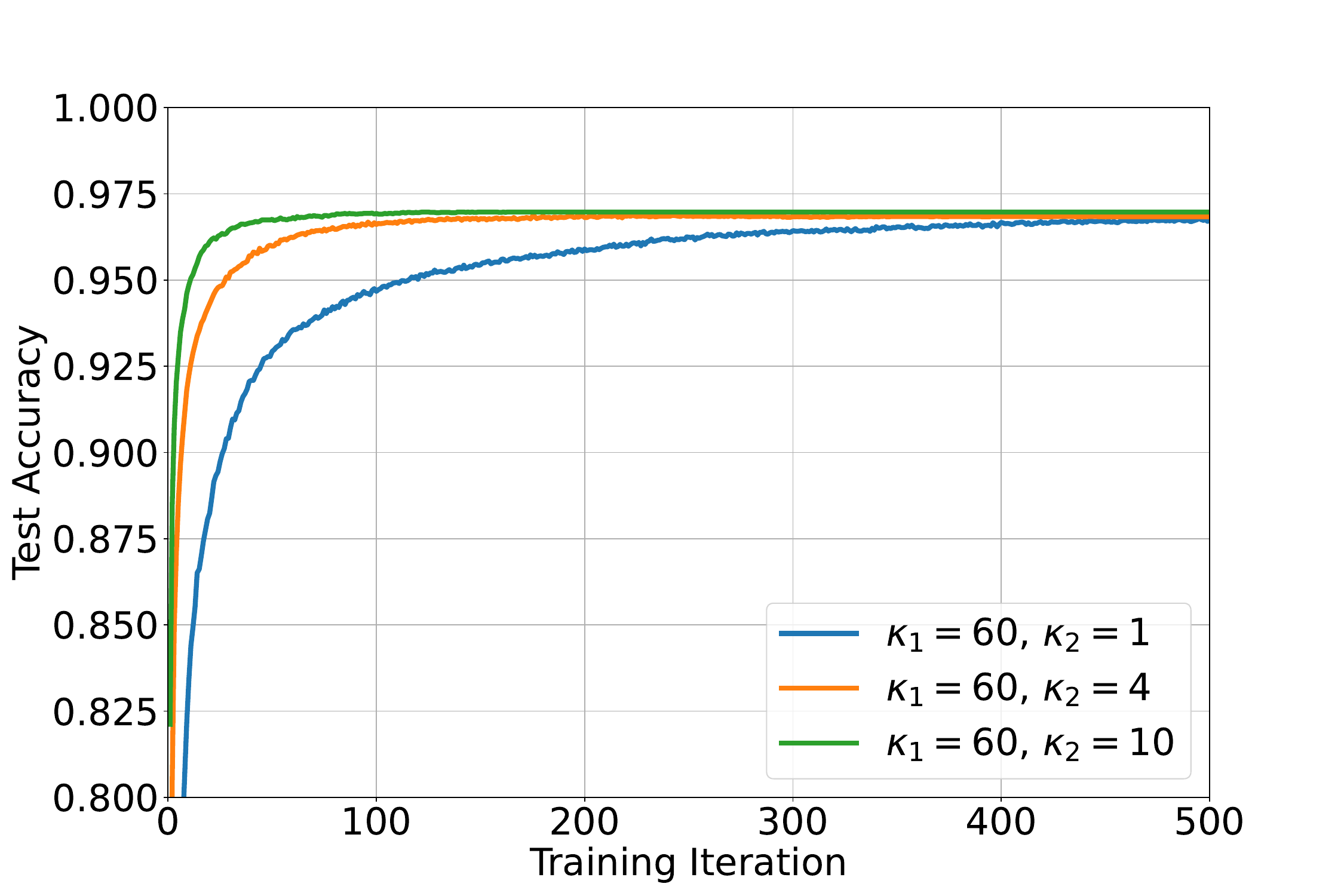}
\caption{FL model accuracy on \emph{MNIST} dataset given different values of $\kappa_2$ for fixed value of $\kappa_1$.}
\label{fig:diffkappa2}
\end{figure}

Furthermore, we examine the effect of varying numbers of local updates and edge aggregations on the accuracy of the FL model. In Fig.~\ref{fig:diffkappa}, we vary the number of local updates by the FL workers, $\kappa_1$, and the number of edge aggregations by the edge servers, $\kappa_2$, with the condition that the cloud aggregation takes place every $60$ iterations, i.e., $\kappa_1\kappa_2 = 60$ using $5\%$ synthetic data samples. We observe that as the number of local updates performed by the FL workers increases, the accuracy of the FL model increases, given that $\kappa_1\kappa_2$ remains constant. In other words, the target accuracy of the FL model can be achieved with fewer number of iterations when the FL workers perform more local computations on their own devices, given that the number of cloud aggregations, i.e., communication frequency with the FL server is fixed. In Fig.~\ref{fig:diffkappa2}, the numbers of local updates by FL workers, $\kappa_1$, is fixed at $60$. With a fixed total number of training iterations, $K=500$, as $\kappa_2$ increases, the length of each cloud interval $T$ increases and thus, the number of cloud aggregations decreases. We observe that as the value of $\kappa_2$ increases, the accuracy of the FL model also increases.

\section{Conclusion}
\label{sec:conclusion}

In this paper, we have proposed a synthetic-data-empowered HFL network to address the statistical challenges posed by non-IID data while motivating the participation of the FL workers. To encourage the participation of FL workers in the FL training process, we have adopted the evolutionary game to model the cluster formation of the edge servers. Then, we have analyzed the evolutionary game and performed numerical simulation to validate the performance of the network. The simulation results have shown that the using synthetic datasets generated by the edge servers allows higher FL accuracy, improving the performance of the FL model. Despite the need to allocate resources for training on the synthetic datasets, the evolutionary equilibrium is still achieved. In future works, we will consider a serverless approach to reduce the reliance on the FL server.

\bibliographystyle{ieeetr}
\bibliography{synthetic}

\end{document}